\documentclass[a4paper,12pt,twoside]{article}
\usepackage{amsmath,amsthm,amsfonts,latexsym,amscd,amssymb, graphicx, color,enumerate, longtable, booktabs,array}

\newcolumntype{N}{>{\centering\arraybackslash}p{1cm}}
\newcolumntype{L}{>{\arraybackslash}p{6cm}}          
\newcolumntype{R}{>{\arraybackslash}p{6cm}}          
\DeclareUnicodeCharacter{200B}{}
\usepackage[hidelinks]{hyperref}
\newtheorem{thm}{Theorem}[section]
\newtheorem{dfn}[thm]{Definition}
\newtheorem{rmk}[thm]{Remark}
\newtheorem{exl}[thm]{Example}

\newtheorem{propo}[thm]{Proposition}
\newtheorem{coro}[thm]{Corollary}

\newtheorem{lem}[thm]{Lemma}

\begin{document}
	\title{\bf { On concatenation of matrices for reversible linear codes over a finite commutative ring and applications to DNA codes} }  
	\author{\bf Avanish Kumar Chaturvedi$^{a}$\footnote{Corresponding author} and \bf Satyadeep Pandey$^{b}$} 
	\date{Department of Mathematics, University of Allahabad \\ Prayagraj-211002, India \\ {\footnotesize  $^{a}$akchaturvedi.math@gmail.com, achaturvedi@allduniv.ac.in\\
			$^{b}$ pandeysatyam14082000@gmail.com}}     	
	\maketitle   
	\begin{abstract}
			In this paper, we develop a generalized framework for constructing reversible linear, reversible self dual and reversible DNA codes using a matrix-theoretic approach based on involutory matrices. The proposed concatenation scheme gives a large class of generator matrices and yields codes with good parameters. The construction is carried out at the level of linear codes and then extended to DNA codes. Using a matrix product approach, we provide a unified method for analysis and proof. Further, we resolve an open problem raised by Oztas et.al. and also we correct and improve some results of them.
		\end{abstract}
		\textbf{Keywords:} Reversible linear code, DNA code, reversible DNA code, MDS and AMDS codes.\\
		\textbf{Mathematics Subject Classification:} 12E20, 94B05, 94B60, 94B65, 94B99.
	\section{Introduction}Deoxyribonucleic acid (DNA) plays a fundamental role in storing and transmitting genetic information in living organisms. A DNA molecule consists of two antiparallel strands forming a double helix, where each strand is composed of four nucleotides: Adenine (A), Thymine (T), Cytosine (C), and Guanine (G). These nucleotides pair according to Watson--Crick complementarity, with A pairing with T and C pairing with G. Each strand possesses an inherent directionality determined by its $5'$ and $3'$ ends, and biological processes rely heavily on both strand orientation and nucleotide complementarity. Owing to these structural properties, DNA sequences are naturally associated with notions of reversal and complement, which have become central constraints in the design of DNA codes. 
	
	The computational potential of DNA was first demonstrated in \cite{Adleman}, in the context of biomolecular computing, where DNA strands were used to solve a combinatorial problem, the traveling salesman problem. Other applications like breaking DES in cryptography \cite{AdlemanDES}, \cite{Boneh}, and storage applications \cite{Mansuripur} have also been explored. However, practical deployment of DNA-based systems requires careful sequence design to avoid undesirable hybridization and secondary structures \cite{Mansuripur}. Consequently, the construction of DNA codes satisfying constraints such as Hamming distance constraint, reversibility, complementarity, and balanced nucleotide distribution, that is, GC-content constraint, has emerged as an important research area. Multiple construction approaches for DNA codes prioritizing one or the other constraints, have been explored in \cite{Abolution}, \cite{Frutos}, \cite{Gaborit}, \cite{King}, \cite{Smith} and \cite{Song}. 
	
	In \cite{Oztas}, authors utilized the concept of double-reflected matrices, wherein a generator matrix is concatenated with its row-wise and column-wise reflection to construct reversible linear codes without imposing restrictive algebraic structures such as cyclicity or polynomial representations. Moreover, it allows reversible DNA codes to be derived from arbitrary full-rank matrices over suitable finite fields, thereby significantly broadening the design space of reversible DNA codes.
	
	 By the motivation of matrix theoretic approach as mentioned in \cite{Oztas}, in the present work, we introduce a new matrix concatenation method for reversible linear codes. In \cite{Oztas} authors used double reflected matrix $M''$ of a generator matrix $M$ in construction of a reversible linear code (see Definition \ref{double reflection definition}). We introduce the idea of $D-$ reflected matrix $M_D'$ with respect to any involutory matrix $D$ for generator matrix $M$ in construction of reversible linear codes (see Definition \ref{D-reflection definition}). We find that $M''$ is a particular case of $M_D'$ if we take involutory matrix $D=J_k$ (see Lemma \ref{product of matrices} (5)). By our method the most of the results as given in \cite{Oztas} are consequence and their proofs are more clear and easy in computation.
	 
	 In Appendix by one example of a generating matrix over $F_{16}$, using SageMath we compute  that there are 256 involutory matrices over $F_{16}$ of order $2\times 2.$ We show that there can be 256 reversible linear codes with respect to $256$ different involutory matrices over $F_{16}$ of order $2\times 2.$ However, by the method used in \cite{Oztas}, we can construct only one reversible linear code. Hence, we show that a strictly larger class of matrices can be concatenated with any full rank matrix $M$ over a finite commutative ring generating reversible linear codes.
	  
	   This not only enlarges the class of generator matrices but also leads to better code parameters in many cases. In Appendix, we found a plenty number of MDS reversible linear codes, that is, $120$ and also $136$ AMDS codes out of $256$. However by the method used in \cite{Oztas} we can construct only one AMDS code.
	 
	  The organization of this paper is as follows. In Section 2, we present the necessary preliminaries and definitions required throughout the paper. Section 3 introduces the construction of reversible linear codes by a suitable concatenation of generator matrices, using any arbitrary involutory matrix along with an analysis of their structural and distance properties. This leads to strict generalization of the concatenation technique given in \cite{Oztas} and also gives better code parameters (see Example \ref{better example}).
	  
	   In Section 4, we discuss the application of the concatentation approach to generate reversible codes which are also self dual. If a matrix generates a self dual code then the concatented matrix generates a reversible self dual code. In Section 5, we specialize the construction over finite fields of the form $\mathbb{F}_{4^{2t}}$ and establish the connection between reversibility of linear codes and DNA codes.
	  
	  Further, we show that it leads to a large number of reversible DNA codes (see Remark \ref{better 2}). Also, we give a counterexample (Example \ref{wrong characterization example}) to show that a result in \cite[Corollary 3]{Oztas} is an incorrect characterization of reversible complement codes. We find that the open problem raised in \cite{Oztas} has false assertion, in support, we provide Example  \ref{open problem example}). Finally, Section 6 concludes the paper with remarks on potential extensions and applications.
	
	\section{Preliminaries}In this section, we recall basic definitions and notation that will be used throughout the paper. 	
	Let $R$ be a finite commutative ring with identity. Following \cite{Oztas}, an \emph{$R$-linear code} $C$ of length $n$ is an $R$-submodule $C \subseteq R^n$.
	
	\begin{dfn}
		Let $A$ and $B$ be two $k\times n$ matrices over a ring $R$. By \emph{concatenation} of $A$ and $B$ we mean a row wise concatenation $[A|B]$, of order $k\times 2n.$
	\end{dfn}
	Recall \cite{Oztas}, for a vector $c = (c_o,c_1,\ldots,c_{n-1}) \in R^n$, the \emph{reverse} of $c$ is defined as $ c^r = (c_{n-1},c_{n-2},\ldots,c_0).$
	\begin{dfn}\cite[Definition 1]{Oztas}
		A linear code $C$ of length $n$ over $R$ is said to be a \emph{reversible} code if for any $c \in C,\:c^r \in C.$
	\end{dfn}
	\begin{dfn}\cite[Definition 2]{Oztas}\label{double reflection definition} Let $M = [a_{ij}]_{m \times n}$ be an $m \times n$ matrix over the ring $R$. Then, the \emph{double-reflected matrix} of $M$ is denoted by $M''$ and defined as $
		M'' = [b_{ij}]_{m \times n},$
		where $b_{ij} = a_{(m-i+1)(n-j+1)},$
		that is,
		\[
		M'' =
		\begin{bmatrix}
			a_{mn} & \cdots & a_{m3} & a_{m2} & a_{m1} \\
			\vdots &        & \vdots & \vdots & \vdots \\
			\vdots &        & \vdots & \vdots & \vdots \\
			a_{2n} & \cdots & a_{23} & a_{22} & a_{21} \\
			a_{1n} & \cdots & a_{13} & a_{12} & a_{11}
		\end{bmatrix}_{m \times n}.
		\]
	\end{dfn}
	
	Recall \cite{Brawley}, a matrix $A \in (R)_n,$ is called \emph{involutory} iff $A^2= I_n,$ where $I_n$ is the $n\times n$	identity matrix, $R$ is a finite commutative ring with identity and $(R)_n$ the set of $n\times n$ matrices over $R.$

	\begin{dfn}\cite[Definition 3]{Oztas}: Let $M = [a_{ij}]_{m\times n}$ and 
		$N = [b_{ij}]_{m\times n}$ be two $m \times n$ matrices over the ring $R$. 
		Then, the \emph{Hadamard product} of $M$ and $N$ is denoted by $M \circ N$ and is defined by
		\[
		M \circ N = [a_{ij}b_{ij}]_{m\times n},
		\]
		where $a_{ij}, b_{ij} \in R$ with $1 \le i \le m$, $1 \le j \le n$.
	\end{dfn}
	
	The Hadmard product of two matrices of the same order is simply the component wise matrix multiplication. Following \cite{Oztas}, we have the following properties of Hadmard product
	
	(i) The Hadamard $s$-power of $M = [a_{ij}]_{m\times n}$ is denoted by 
	$M^{\circ s}$, and defined as
	$
	M^{\circ s} = [(a_{ij})^s]_{m\times n}.
	$
	
	(ii) Since $R$ is a finite commutative ring, so	$M \circ N = N \circ M.$
	
		In the following, we state some results related to a special involutory permutation matrix and provide a representation of the double reflected matrix $M''$ of $M$ in terms of a suitable product of matrices.
	
	\begin{lem}\label{product of matrices}
		Let $R$ be a finite commutative ring with unity. For positive integers $k$ and $n$, let $J_n$ be the $n \times n$ permutation matrix with ones on the anti-diagonal and zeros elsewhere and $M$ be a $k\times n$ matrix over $R.$ Then
		\begin{enumerate}[(1)]
			\item $J_n^2$ is the $n\times n$ identity matrix over $R$, so, $J_n$ is an involutory matrix over $R$.
			\item For any $c=(c_0\: c_1...\:c_{n-1})\in R^n$, the reverse of $c$ is $c^r=cJ_n$, where $c^r=(c_{n-1}\: c_{n-2}\:...c_0)$.
			\item $J_kM=M_1$, where $M_1$ is the matrix obtained by reversing the order of the rows of $M$.
			\item $MJ_n=M_2$, where $M_2$ is the matrix obtained by reversing order of the columns of $M$.
			\item $J_k M J_n=M'',$ where $M''$ obtained by reversing the order of rows and columns of $M$ simultaneously, is the double reflected matrix of $M$.
			\item $[A\mid B]J_{2n}=[BJ_n\mid AJ_n]$ for any $k\times n$ matrices $A$ and $B$.
			\item $A^{os}J_n=(AJ_n)^{os}$, for any $k\times n$ matrix $A$, where $A^{os}$ denotes the Hadamard $s$ power of the matrix $A$.
		\end{enumerate}
	\end{lem}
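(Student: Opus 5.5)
All seven parts follow from one description of $J_n$ by entries. Index rows and columns from $1$ to $n$ and write $(J_n)_{ij}=\delta_{i,\,n+1-j}$, where $\delta$ is the Kronecker delta in $R$ (the values $0$ and $1$ of $R$). Each part then comes down to a short computation with this formula. Since $R$ is commutative with unity, multiplying by $0$ or $1$ behaves exactly as for integer permutation matrices.

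For (1), I compute $(J_n^2)_{ij}=\sum_{l}\delta_{i,n+1-l}\,\delta_{l,n+1-j}$. The only nonzero term is at $l=n+1-j$, and it equals $\delta_{i,j}$. Hence $J_n^2=I_n$.

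For (2), let $c$ be indexed $c_0,\dots,c_{n-1}$, so that $c$ occupies positions $1,\dots,n$. The $j$-th entry of $cJ_n$ is $\sum_i c_{i-1}\delta_{i,n+1-j}=c_{n-j}$. As $j$ runs from $1$ to $n$, this gives $(c_{n-1},\dots,c_0)=c^r$.

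For (3) and (4), the same computation applies row- or column-wise: $(J_kM)_{ij}=a_{k+1-i,\,j}$ and $(MJ_n)_{ij}=a_{i,\,n+1-j}$. Part (4) is just part (2) applied to each row of $M$.

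For (5), I combine (3) and (4) using associativity of matrix multiplication over $R$. This gives $(J_kMJ_n)_{ij}=a_{k+1-i,\,n+1-j}$, which is exactly the double-reflected matrix of Definition \ref{double reflection definition} with $m=k$.

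For (6), I write $J_{2n}$ in $n\times n$ blocks as $\begin{bmatrix}0&J_n\\ J_n&0\end{bmatrix}$. This holds because the anti-diagonal of a $2n\times 2n$ matrix passes through the two off-diagonal blocks, and it is the anti-diagonal within each of them. Block multiplication then gives $[A\mid B]J_{2n}=[BJ_n\mid AJ_n]$. One could also simply note that reversing the columns of $[A\mid B]$ reverses the order of the two blocks and reverses the columns inside each block.

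For (7), I use that right multiplication by $J_n$ only permutes entries: $(AJ_n)_{ij}=a_{i,n+1-j}$. The Hadamard power acts entrywise, so both $A^{\circ s}J_n$ and $(AJ_n)^{\circ s}$ have $(i,j)$ entry $a_{i,n+1-j}^{\,s}$.

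None of these steps is a real obstacle. The only place that needs care is (6): I have to check the block decomposition of $J_{2n}$ honestly, by verifying that $\delta_{i,2n+1-j}$ restricted to the block $i\le n<j$ equals $\delta_{i,n+1-(j-n)}$, and similarly for the other block.
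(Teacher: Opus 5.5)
Your proof is correct and takes the same route as the paper, whose proof consists only of the remark that all seven parts follow by direct computation of the matrix products. Your entrywise formula $(J_n)_{ij}=\delta_{i,n+1-j}$ and the block form $J_{2n}=\begin{bmatrix}0&J_n\\ J_n&0\end{bmatrix}$ spell out exactly those computations, and each of your index calculations checks out.
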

	\begin{proof}
		The statements follow from direct computation of the matrix products involved.
	\end{proof}
		Next, we recall standard notions related to distance. Following \cite[Definition 2.3.1]{Xing}, if $x$ and $y$ are words of length $n$ over $R$ then the \emph{Hamming distance} from $x$ to $y$ denoted by $d(x,y)$ is defined to be the number of places at which $x$ and $y$ differ. The \emph{minimum distance} of a code $C$ over $R$, denoted by $d(C)$, is defined as $ d(C) = \min \{ d(x,y): x, y \in C \}.$ 
	 Following \cite{Xing}, with respect to a given minimum Hamming distance $d$ for an $n$ length block code over an alphabet $A_q$ of size $q$, the \emph{Singleton bound} on the maximum size of the code, $A_q(n,d)$, is given as:
	\begin{propo}\cite[Theorem 5.4.1]{Xing}	For any integer $q \ge 1$, any positive integer $n$ and any integer $d$ such that 
		$1 \le d \le n$, we have
		\[
		A_q(n,d) \le q^{\,n-d+1}.
		\]
     \end{propo}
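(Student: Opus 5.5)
The plan is the standard puncturing (projection) argument. Let $C\subseteq A_q^n$ be any code of length $n$ over an alphabet $A_q$ of size $q$ with minimum distance at least $d$. I will show that $|C|\le q^{n-d+1}$. Taking the maximum over all such codes then gives $A_q(n,d)\le q^{n-d+1}$. No linearity or ring structure is needed, so the argument works verbatim for block codes over an arbitrary finite alphabet, including $R$.

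\textbf{Key step.} Define the map $\pi: C\to A_q^{\,n-d+1}$ that deletes the last $d-1$ coordinates, so $\pi(c_0,\dots,c_{n-1})=(c_0,\dots,c_{n-d})$. The hypothesis $1\le d\le n$ guarantees that $n-d+1\ge 1$, so this map makes sense. I claim $\pi$ is injective.

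\textbf{Injectivity.} Suppose $x\neq y$ in $C$ with $\pi(x)=\pi(y)$. Then $x$ and $y$ agree on the first $n-d+1$ coordinates. They can therefore differ only in the remaining $d-1$ positions, which gives $d(x,y)\le d-1<d$. This contradicts the assumption that the minimum distance of $C$ is at least $d$. Hence $\pi$ is injective.

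\textbf{Conclusion.} Injectivity gives $|C|\le |A_q^{\,n-d+1}|=q^{n-d+1}$, as required. The degenerate case $|C|\le 1$, where the minimum distance is vacuous, is trivially consistent with the bound. I do not expect any real obstacle. The only care needed is fixing the convention that $A_q(n,d)$ is the maximum size of a code with minimum distance at least $d$, and checking that the projected length $n-d+1$ is positive.
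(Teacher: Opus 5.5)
Your proof is correct. The paper gives no proof of its own here and simply quotes the Singleton bound from Ling--Xing. Your argument, that deleting the last $d-1$ coordinates is injective on any code of minimum distance at least $d$ and hence $|C|\le q^{\,n-d+1}$, is the standard textbook proof of it, and it also covers the trivial case $q=1$ that the statement allows.
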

     Consequently, we have the following definition		
	  \begin{dfn}\cite[Definition 5.4.3]{Xing}
		A linear code with parameters $[n,k,d]$ such that $k+d = n+1$ 
		is called a maximum distance separable (MDS) code.
		\end{dfn}
		\begin{dfn}\cite[Definition 1]{De Broer} The Singleton defect of an $[n,k,d]$ code $C$ is $s(C)=n-k+1-d.$
		\end{dfn}
        \begin{dfn}\cite[Definition 5]{De Broer} A code with Singleton defect $s(C)=1$ is almost MDS (AMDS for short).
        \end{dfn}
        		
	\section{Reversible linear codes by concatenation of matrices}In this section, we introduce the construction of a reversible linear code using a generator matrix obtained by concatenating any full rank $k\times n$ matrix over a finite commutative ring with unity and its $D-$reflected matrix, where $D$ is an arbitrary $k\times k$ involutory matrix. First, we define $D-$reflected matrix.
	\begin{dfn}\label{reflected}\label{D-reflection definition}
		Let $M$ be a $k\times n$ matrix over a ring $R$ and $D$ be any involutory $k\times k$ matrix over $R$, that is, $D\in R^{k\times k}$ such that $D^2=I_d.$ The $D-$ reflected matrix of $M$ denoted by $M'_D$ is defined as the matrix product $DMJ_n$, where $J_n$ is the permutation matrix of order $n$ over $R$ with ones as anti diagonal entries and zeros elsewhere. 
		\end{dfn}
		\begin{rmk}
			\begin{enumerate}[(1)]
				\item The $D-$ reflected matrix $M_D'=DMJ_n$ of $M$ is simply the product of the involutory matrix $D$ with the matrix $MJ_n$ obtained by reversing the order of the columns of $M$.
				\item By Lemma \ref{product of matrices} (5), $M''=J_kMJ_n.$ In particular if $D=J_k$, the notion of $D-$ reflected matrix $M_D'$ of $M$ is same as the notion of double reflected matrix $M''.$ 
				\item It is clear that for a given matrix $M$, the double reflected matrix $M''$ is a fixed matrix. However, in general we can construct a large number of $D-$reflected matrices of $M$ with respect to any involutory matrix $D$ over ring $R.$  
				\item There always exists a $D-$reflected matrix $M_D'$ for $D=J_k$ and in this case $M_D'=M''.$ However, the notion of $D-$ reflected matrices is more general, for example, over $R=\mathbb{F}_{16}$, there are 256 distinct $2\times2$ involutory matrices (see Appendix).
			\end{enumerate}
		\end{rmk}
	
	We give the following result showing that the code parameters of the linear codes generated by any full rank matrix $M$ over a finite commuatative ring with identity and $M_D'$  are identical.
	\begin{thm}\label{parameters}Let $M$ be a $k\times n$ matrix over a finite commutative ring $R$ with unity such that $rank(M)=k$, $D$ be any $k\times k$ involutory matrix over $R$ and $M_D'$ be the associated $D-$ reflected matrix. Then the linear codes generated by $M$ and $M_D'$ have the same code parameters $[n,k,d].$ 
	\end{thm}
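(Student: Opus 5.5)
The plan is to compare the codes $C=\langle M\rangle$, the row space $\{xM : x\in R^k\}$, and $C'=\langle M_D'\rangle=\{xDMJ_n : x\in R^k\}$. I will show that $C'$ is the image of $C$ under the coordinate reversal $c\mapsto cJ_n$, and then use that reversal is a bijective, Hamming-isometric, $R$-linear map.

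\emph{Step 1.} Since $D^2=I_k$, $D$ is invertible, so right multiplication by $D$ is a bijection of $R^k$. Hence
\[
\{xD : x\in R^k\}=R^k \quad\text{and}\quad \{xDM : x\in R^k\}=\{yM : y\in R^k\}=C.
\]
So the code generated by $DM$ equals $C$, and therefore
\[
C'=\{cJ_n : c\in C\}=\{c^r : c\in C\}
\]
by Lemma \ref{product of matrices}(2).

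\emph{Step 2.} The length of both codes is $n$. The map $\phi:R^n\to R^n$, $\phi(c)=cJ_n$, is $R$-linear and is its own inverse by Lemma \ref{product of matrices}(1). It restricts to an $R$-module isomorphism $C\to C'$, so $|C|=|C'|$ and the two codes are isomorphic as modules.

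\emph{Step 3.} Reversal only permutes coordinates, so $d(xJ_n,yJ_n)=d(x,y)$ for all $x,y\in R^n$. Since $\phi$ is a bijection $C\to C'$, the minimum distances agree.

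\emph{Step 4.} This step settles the dimension $k$ and is the main obstacle. Over a ring, ``$rank$'' and ``dimension'' need care. The cleanest route is to show $M_D'$ also has rank $k$ in whatever sense the paper means, using
\[
M_D'=D\,M\,J_n,
\]
where $D$ and $J_n$ are invertible. Under the reading that rank is the maximal number of rows independent over $R$, the rows of $DMJ_n$ are independent exactly when the rows of $M$ are: multiplying by the invertible $J_n$ preserves relations $xMJ_n=0\iff xM=0$, and $x\mapsto xD$ is bijective. Under the reading via free rank or nonvanishing $k\times k$ minors, multiplication by invertible matrices preserves the determinantal ideals. Either way, $M_D'$ is a full-rank $k\times n$ generator matrix and $C'$ has parameter $k$. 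This completes the proof that $C$ and $C'$ share the parameters $[n,k,d]$.
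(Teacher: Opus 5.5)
Your proof is correct and follows essentially the same route as the paper's. Both arguments use $D^2=I_k$ to get $\{xDM : x\in R^k\}=\langle M\rangle$, identify $\langle M_D'\rangle$ as the set of reverses of the codewords of $\langle M\rangle$, and conclude because reversal is a Hamming-weight-preserving bijection; your Step 4 just spells out the rank/size verification over a ring more carefully than the paper's one-line ``it is easy to verify'' claim.
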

	\begin{proof} Let the linear code generated by $M$ over $R$ have parameters $n,k$ and $d$, where $n$ is the length of the code, $|R^k|$ the size of the code and $d$ the minimum Hamming distance of the code. Let $x$ be a codeword in the linear code generated by $M_D'$. Then by definition of a generator matrix, $x=aM_D'$ for some $a\in R^k$. Since $M_D'$ has order $k\times n$, $x=aM_D'$ has length $n$. Thus, the linear code generated by $M_D'$ has length $n$. Further, since $D$ is an involutory matrix over $R$, therefore $aD=0$ for some $a\in R^k$ implies $a=0$, for if $a\neq 0$ then $aD=0$ implies $aD^2=0$, a contradiction. We define a map $\bar{D}: R^k\longrightarrow R^k$ such that $\bar{D}(a)=a.D$ for all $a\in R^k.$ Let $\bar{D}(a)=\bar{D}(b)$ for some $a,b\in R^k$. Then $aD=bD$. Hence, $aD-bD=(a-b)D=0$. Thus, $a-b=0$. Therefore, $a=b$ because $D$ is involutory. Hence, $\bar{D}$ is bijective (R-isomomorphism) from $R^k$ to $R^k$. Thus $R^k=\{aD:\:a\in R^k\}$. Also, it is easy to verify that the $k$ rows of $MJ_n$ are linearly independent as $rank(M)=k$. Hence, the size of the code generated by $M_D'$ is $|R^k|$. Finally, since $R^k=\{aD:\:a\in R^k\}$, therefore the linear code generated by $M$, that is, $\{aM\: :a\in R^k\}$ is equal to $\{aDM\: :a\in R^k\}$. By Lemma \ref{product of matrices} (2), the reverse of $aDM$ is given by $aDMJ_n$. Since $aDMJ_n=aM_D'.$ Therefore the linear code generated by $M_D',$ that is, $\{aM_D'\: :a\in R^k\}$  is precisely the set of reverse of all the codewords of the linear code generated by $M$. Since reversal does not change the Hamming weight of a word, therefore the linear code generated by $M_D'$ has the same minimum non zero Hamming weight $d.$ Hence $M_D'$ has the same Hamming distance parameter, as the linear code generated by $M$.		
	\end{proof}
	As a consequence of above result, in particular for $D=J_k$, we have the following result. In rest of the generalization of results from \cite{Oztas}, here we take $D=J_k$ in proof.
	\begin{coro}\cite[Theorem 1]{Oztas} Let $M$ be a $k\times n$ matrix over the ring $R$ with $rank(M)=k$. Then, the codes $C=<M>$ and $C'=<M''>$ have the same $[n,k,d]$ parameters.		
	\end{coro}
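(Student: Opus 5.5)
The plan is to deduce the statement directly from Theorem \ref{parameters} by specializing the involutory matrix to $D=J_k$. First, by Lemma \ref{product of matrices} (1), $J_k^2=I_k$ over $R$, so $J_k$ is an admissible $k\times k$ involutory matrix in Definition \ref{D-reflection definition}. Second, by Lemma \ref{product of matrices} (5), the associated $D$-reflected matrix is
\[
M'_{J_k}=J_kMJ_n=M'',
\]
which is exactly the double-reflected matrix of $M$ in the sense of Definition \ref{double reflection definition}.

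With these two facts in hand, I apply Theorem \ref{parameters} with $D=J_k$. The hypothesis $rank(M)=k$ is the same in both statements. The theorem then gives that $\langle M\rangle$ and $\langle M'_{J_k}\rangle=\langle M''\rangle$ share the parameters $[n,k,d]$.

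For a self-contained check, one can also see this concretely. As $a$ ranges over $R^k$, so does $aJ_k$. Hence
\[
\{aM'' : a\in R^k\}=\{(aJ_kM)J_n : a\in R^k\}=\{cJ_n : c\in\langle M\rangle\}.
\]
By Lemma \ref{product of matrices} (2), this is the set of reverses of the codewords of $\langle M\rangle$. Reversal is a bijection that preserves length and Hamming weight, so the size and the minimum distance agree.

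There is no real obstacle here. The only point needing care is that the specialization is legitimate: $J_k$ must be involutory over an arbitrary commutative ring with unity, and this is Lemma \ref{product of matrices} (1). The rest is a direct substitution into Theorem \ref{parameters}.
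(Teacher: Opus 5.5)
Your proposal is correct and follows the paper's route: the paper obtains this corollary by specializing Theorem \ref{parameters} to $D=J_k$, using Lemma \ref{product of matrices} (1) and (5) to identify $M'_{J_k}=J_kMJ_n$ with $M''$. Your ``self-contained check'' is the $D=J_k$ case of the argument inside the proof of Theorem \ref{parameters}, namely that $\langle M''\rangle$ is exactly the set of reverses of the codewords of $\langle M\rangle$.
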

	Next, we show that the concatenation of $M$ with $M_D'$ results in a generator matrix which generates a reversible linear code over $R$.
	\begin{thm}\label{reversibility}Let $M$ be a $k\times n$ matrix over a finite commutative ring $R$ with unity such that $rank(M)=k$ and $D$ be any $k\times k$ involutory matrix over $R$. Then the linear code generated by the generator matrix $[M\mid M_D']$ is reversible.
	\end{thm}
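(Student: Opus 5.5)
We plan to show that the reverse of an arbitrary codeword of $C=\langle [M\mid M_D']\rangle$ is again a codeword. The approach is a direct matrix computation using Lemma \ref{product of matrices}, together with the fact that $D$ is involutory.

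First, let $x\in C$ be arbitrary. Then $x=a[M\mid M_D']$ for some $a\in R^k$, and $x$ has length $2n$. By Lemma \ref{product of matrices} (2), applied with $J_{2n}$, the reverse of $x$ is
\[
x^r=xJ_{2n}=a[M\mid M_D']J_{2n}.
\]

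Next, we apply Lemma \ref{product of matrices} (6) with $A=M$ and $B=M_D'=DMJ_n$. This gives
\[
[M\mid M_D']J_{2n}=[M_D'J_n\mid MJ_n]=[DMJ_n^2\mid MJ_n]=[DM\mid MJ_n],
\]
where the last step uses Lemma \ref{product of matrices} (1), namely $J_n^2=I_n$. Since $D^2=I_k$, we may write $MJ_n=D(DMJ_n)=DM_D'$. Hence
\[
[M\mid M_D']J_{2n}=D[M\mid M_D'].
\]

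Therefore $x^r=(aD)[M\mid M_D']$. Since $aD\in R^k$, this shows $x^r\in C$, so $C$ is reversible.

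No step here is a real obstacle. The only point needing care is the block identity of Lemma \ref{product of matrices} (6), which the paper has already stated; we will check that the ordering of the blocks after reversal matches exactly. Note that the hypothesis $\operatorname{rank}(M)=k$ is not needed for reversibility itself, only for the code parameters.
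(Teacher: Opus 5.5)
Your proof is correct and matches the paper's argument: both compute $x^r=xJ_{2n}$ using Lemma \ref{product of matrices} (2) and (6), and then use $J_n^2=I_n$ and $D^2=I_k$ to rewrite it as $(aD)[M\mid M_D']$. Your version states this as the single matrix identity $[M\mid M_D']J_{2n}=D[M\mid M_D']$, and you are right that the hypothesis $\operatorname{rank}(M)=k$ is not needed for reversibility.
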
 
	\begin{proof}Let $x$ be a codeword in the linear code generated by $ [M\mid M_D']$ in $R^{2n}$. Then by definition of a generator matrix, $x=a[M\mid M_D']$ for some $a\in R^k.$ By Definition \ref{reflected}, $x=a[M\mid DMJ_n].$ Thus, $x=[aM\mid aDMJ_n].$ By Lemma \ref{product of matrices} $(2)$ and $(6)$, the reverse of $x$, $x^r=[aM\mid aDMJ_n]J_{2n}=[ aDMJ_nJ_n\mid aMJ_n].$ Using Lemma \ref{product of matrices} $(1)$, $x^r=[aDM\mid aMJ_n].$ Since $D^2=I_d$, therefore $x^r=[aDM\mid aD^2MJ_n].$ Hence, $x^r=aD[M\mid DMJ_n]$. Since $D$ is a map from $R^k$ to $R^k$, therefore $aD=b$ for some $b\in R^k$. Thus, $x^r=b[M\mid DMJ_n]=b[M\mid M_D'].$ So, $x^r$ belongs to the linear code generated by $[M\mid M_D']$. Hence, the linear code generated by $[M\mid M_D']$ is reversible.
	\end{proof}
	\begin{coro}\cite[Theroem 2]{Oztas} Let $M$ be a $k\times n$ matrix over the ring $R$ and $M''$ be double reflected matrix of $M$. Then, the code $<M\mid M''>$ is a reversible code over $R.$
	\end{coro}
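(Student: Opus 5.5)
The plan is to obtain this corollary as the special case $D=J_k$ of Theorem \ref{reversibility}. First, by Lemma \ref{product of matrices} (1), $J_k^2=I_k$, so $J_k$ is an admissible involutory $k\times k$ matrix over $R$. Second, by Lemma \ref{product of matrices} (5) and Remark (2) following Definition \ref{D-reflection definition}, the $J_k$-reflected matrix is $M'_{J_k}=J_kMJ_n=M''$. Hence $[M\mid M'']=[M\mid M'_{J_k}]$, and Theorem \ref{reversibility} gives that the code it generates is reversible.

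The one point to watch is that the corollary as stated does not assume $\mathrm{rank}(M)=k$, while Theorem \ref{reversibility} does. I will check that the proof of Theorem \ref{reversibility} never uses the rank hypothesis. The argument is purely algebraic. For $x=a[M\mid J_kMJ_n]$, Lemma \ref{product of matrices} (2), (6) and (1) give
\[
x^r=[aJ_kM\mid aMJ_n]=(aJ_k)[M\mid J_kMJ_n],
\]
using $J_k^2=I_k$. Since $aJ_k\in R^k$, we get $x^r\in\langle M\mid M''\rangle$.

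So I will either cite Theorem \ref{reversibility} and note that the rank condition is not used, or simply repeat this short direct computation with $D=J_k$. No real obstacle is expected. The only possible subtlety is this rank hypothesis, and the direct computation settles it.
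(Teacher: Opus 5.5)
Your proposal is correct and matches the paper, which obtains this corollary by specializing Theorem \ref{reversibility} to $D=J_k$, so that $M'_{J_k}=J_kMJ_n=M''$. Your extra check is also right: the rank hypothesis of Theorem \ref{reversibility} is never used in its proof, so your direct computation $x^r=(aJ_k)[M\mid J_kMJ_n]$ justifies the corollary as stated, which omits that hypothesis, and the paper leaves this point implicit.
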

	Next, we discuss the code parameters $n,k $ and $d$ of the linear code generated by the matrix $[M\mid M_D'].$
	\begin{thm}\label{reversible parameter} Let $M$ be a $k\times n$ matrix over a finite commutative ring $R$ with unity such that $rank(M)=k$ and $D$ be any $k\times k$ involutory matrix over $R$. Then the linear code generated by the generator matrix $[M\mid M_D']$ has length $2n$, size $|R^k|$ and distance greater than or equal to $2d$ where $n,k$ and $d$ are the code parameters of the linear code generated by $M$.  
		\end{thm}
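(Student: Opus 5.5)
The plan is to handle the three parameters separately, and to get the distance bound by splitting each codeword into its two halves and using Theorem \ref{parameters} on each half.

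\emph{Length and size.} Since $[M\mid M_D']$ is a $k\times 2n$ matrix, every codeword $x=a[M\mid M_D']=[aM\mid aM_D']$ with $a\in R^k$ has length $2n$. For the size, we will show that the map $a\mapsto a[M\mid M_D']$ from $R^k$ onto the code is injective. If $a[M\mid M_D']=b[M\mid M_D']$, then comparing first blocks gives $(a-b)M=0$. The hypothesis $rank(M)=k$ is used here in the same sense as in the proof of Theorem \ref{parameters}: the rows of $M$ are independent, so $aM=0$ forces $a=0$. Hence $a=b$, and the code has exactly $|R^k|$ codewords. Equivalently, the code $\langle M\rangle$ already has size $|R^k|$, and the first-block projection is a bijection from the new code onto it.

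\emph{Distance.} Because the code is linear, it suffices to bound the weight of a nonzero codeword $x=[aM\mid aDMJ_n]$. A nonzero codeword forces $a\neq 0$. By the injectivity above, $aM\neq 0$, and $aM$ lies in $\langle M\rangle$, so $wt(aM)\ge d$. For the second block, the proof of Theorem \ref{parameters} shows that $aDMJ_n$ is the reverse of $aDM$, and $aDM\in\langle M\rangle$. Since $D$ is involutory, $aD\neq 0$, because $aD=0$ would give $a=aD^2=0$. Hence $aDM\neq 0$, so $wt(aDMJ_n)=wt(aDM)\ge d$. Weights of the two blocks add, so $wt(x)\ge 2d$. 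This gives $d(\langle[M\mid M_D']\rangle)\ge 2d$.

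The main point needing care is the nonvanishing of the second block, that is, $aDM\neq 0$ whenever $a\neq 0$. This uses both the invertibility of $D$ (from $D^2=I_k$) and the injectivity of $a\mapsto aM$ (from $rank(M)=k$). Both facts are already established in the proof of Theorem \ref{parameters}, so I expect no real obstacle. One should only note that over a ring, ``$rank(M)=k$'' is being interpreted as this injectivity.
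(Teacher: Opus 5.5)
Your proof is correct and follows essentially the same route as the paper. Length comes from the $k\times 2n$ shape, size from the independence of the rows of $M$ surviving concatenation, and distance from splitting $x=[aM\mid aM_D']$ and bounding each half by $d$ using Theorem \ref{parameters}. You also explicitly check that both halves are nonzero when $a\neq 0$, using $rank(M)=k$ and $D^2=I_k$, which fills in a step the paper leaves implicit.
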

		\begin{proof} Since each row of the concatenated matrix $[M\mid M_D']$ has length $2n$, each codeword being a linear combination of the rows of $[M\mid M_D']$ has length $2n$. Further, as $rank(M)$ is $k$, all the $k$ rows of the matrix $[M\mid M_D']$ are also linearly independent as lengthening of rows can not make two independent rows dependent. Thus, size of the code generated by $[M\mid M_D']$ is $|R^k|.$ Moreover, let $x$ be a codeword in the code generated by $[M\mid M_D']$. Then $x=(u\mid v)$ where $u$ and $v$ are codewords in the codes generated by $M$ and  $M_D',$ respectively. Since $d$ is the minimum distance of the code generated by $M$, therefore, the Hamming weight of $u$ is at least $d$. Also, by Theorem \ref{parameters}, $v$ has Hamming weight at least $d$. Thus $x$ has Hamming weight at least $2d$. Thus the minimum non zero Hamming weight which is equal to the minimum Hamming distance of the linear code generated by $[M\mid M_D']$ is at least $2d$.       
		\end{proof}
		
		\begin{coro}\cite[Theorem 3]{Oztas} Let $M$ be a $k\times n$ matrix over the ring $R$ with $rank(M) = k$. Then, the reversible code $C=<M\mid M''>$ has $[2n,k,d']$ parameters, where $d'\geq 2d$ and $<M>$ is a $[n, k, d]-$code.
		\end{coro}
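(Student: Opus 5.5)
This corollary is the special case $D=J_k$ of Theorems \ref{reversibility} and \ref{reversible parameter}, so the plan is to check that $J_k$ is an admissible choice of $D$ and that $M_D'=M''$ for this choice. By Lemma \ref{product of matrices} (1), $J_k^2=I_k$, so $J_k$ is a $k\times k$ involutory matrix over $R$. By Definition \ref{D-reflection definition} together with Lemma \ref{product of matrices} (5), $M'_{J_k}=J_kMJ_n=M''$. Hence the code $C=<M\mid M''>$ coincides with the code generated by $[M\mid M'_{J_k}]$.

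Next, Theorem \ref{reversibility}, applied with $D=J_k$, shows that $C$ is reversible. Since $rank(M)=k$ is assumed, Theorem \ref{reversible parameter} with $D=J_k$ then gives the parameters: $C$ has length $2n$, size $|R^k|$, and minimum distance $d'\geq 2d$, where $[n,k,d]$ are the parameters of $<M>$. This is exactly the statement that $C$ is a $[2n,k,d']$ code with $d'\geq 2d$.

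Nothing here is a real obstacle. The one point to state explicitly is the identification $M'_{J_k}=M''$, which is immediate from Lemma \ref{product of matrices} (5); the rest is a direct substitution.
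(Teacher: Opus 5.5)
Your proposal is correct and follows the paper's route: the paper obtains this corollary by setting $D=J_k$ in Theorem \ref{reversible parameter}, with reversibility coming from Theorem \ref{reversibility}. It uses the same two facts you check from Lemma \ref{product of matrices}: $J_k^2=I_k$ and $M'_{J_k}=J_kMJ_n=M''$.
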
 
		Below we give two examples for the illustration of the construction of the concatenated generator matrix.
		\begin{exl} Consider the ring $R = \mathbb{Z}_4$ and a $2\times 4$ generator matrix $M =
			\begin{bmatrix}
				1 & 1 & 2 & 0 \\
				0 & 2 & 1 & 1
			\end{bmatrix}$. Then
			
			(1). We show that $M$ generates a $[4,2,2]$ linear code over $\mathbb{Z}_4.$ Clearly both the rows of $M$ are linearly independent. So $M$ generates a free module of rank $2$ over $\mathbb{Z}_4$. Also it is easy to verify that the distance of the linear code generated by $M$ is $2$.
			
			(2). Consider the $2 \times 2$ involutory matrix $D =
			\begin{bmatrix}
				1 & 2 \\
				0 & 1
			\end{bmatrix}$ over $R$ such that $D^2 =I_d$. Then $DM =
			\begin{bmatrix}
				1 & 1 & 0 & 2 \\
				0 & 2 & 1 & 1
			\end{bmatrix}$ and $M_D'= DMJ_4 =
			\begin{bmatrix}
				2 & 0 & 1 & 1 \\
				1 & 1 & 2 & 0
			\end{bmatrix}$. We concatenate the matrix $M$ with $M_D'$ to get a generator matrix $G =
			\begin{bmatrix}
				1 & 1 & 2 & 0 & 2 & 0 & 1 & 1 \\
				0 & 2 & 1 & 1 & 1 & 1 & 2 & 0
			\end{bmatrix}$.
			
			(3). By Theorem \ref{reversible parameter}, the linear code generated by $G$ over $\mathbb{Z}_4$ is reversible and has parameters $[8,2,d']$ where $d'\geq 4$. Direct computation shows $d'=4$. Thus $<G>$ is an $[8,2,4]$ reversible linear code over $\mathbb{Z}_4$.
			\end{exl}
			\begin{exl}\label{better example}
				Consider the finite field $\mathbb{F}_{16}$ and the $2\times 4$ matrix $M=\begin{bmatrix}
					1 & w & w^2 & w^3\\
					1 & w^2 & w^4 & w^6
				\end{bmatrix}$. It is easy to verify that $G$ is a rank 2 matrix over $\mathbb{F}_{16}.$
				
				(1). Consider the involutory matrix $J_2=\begin{bmatrix}
				0 & 1\\
				1 & 0			
				\end{bmatrix}$. Then the generator matrix obtained by concatenating $M$ with the double reflection $M''=J_2MJ_4$ of $M$, is given by 
				$G=\begin{bmatrix}
					1 & w & w^2 & w^3 & w^6 & w^4 & w^2 & 1\\
					1 & w^2 & w^4 & w^6 & w^3 & w^2 & w & 1
				\end{bmatrix}.$ By Theorem \ref{reversibility}, $G$ generates a reversible linear code. By SageMath computations \cite{Sage} the parameters of the code in this case are $[n,k,d]=[8,2,6]$. Thus $G$ generates an AMDS code. In the method of double reflected concatenation, we can construct only one generating matrix by which we have only AMDS code. 
				
				(2). If we take the involutory matrix $D=\begin{bmatrix}
				1 & 1\\
				0 & 1
				\end{bmatrix}$, then the concatenated generator matrix $G'=[M\mid M_D']=[M\mid DMJ_6]$ is given by $G'=\begin{bmatrix} 
				1 & w & w^2 & w^3 & w^2 & w^{10} & w^5 & 0\\
				1 & w^2 & w^4 & w^6 & w^6 & w^4 & w^2 & 1
				\end{bmatrix}.$ By Theorem \ref{reversibility}, the linear code generated by $G'$ is reversible. By SageMath computations \cite{Sage} the parameters of the code in this case are $[n,k,d]=[8,2,7]$. Hence, we get a linear reversible MDS code. In our method of $D-$reflected concatenation, we can construct not only AMDS code but also MDS code by suitable choices of involutory matrix over $R$ with respect to same $M$ over $\mathbb{F}_{16}.$
			\end{exl}
			\begin{rmk}
				\begin{enumerate}
			\item Example \ref{better example} shows that by choosing an appropriate involutory matrix in place of double reflection, the code parameters can be improved while keeping the reversibility intact. It also highlights that we can even construct MDS reversible codes by adequate involutory matrixs.
			\item In Example \ref{better example}, the alphabet used is $\mathbb{F}_{16}$. In the appendix, Table \ref{involutory matrix table}, we give the list of 256 involutory matrices, $D$ such that $<M\mid M_D'>$ is a reversible linear code, out of which 120 are MDS codes. For larger alphabets, we get even large number of involutory matrixs which serve the purpose. Thus, we get a large class of generator matrices which generate different reversible linear codes.
		\end{enumerate}
			\end{rmk}
			Next, we develop a similar construction technique of generator matrices over the finite field $\mathbb{F}_{4^{2t}}$ where $t$ is a positive integer. We then show its application in the construction of reversible DNA codes.
			\section{Reversible self dual codes} In this section we discuss the construction of linear codes which are both self dual as well as reversible using the concatenation of matrices approach for a generator matrix. First we mention the following result for the blockwise product of matrices.
			\begin{lem}\cite[Theorem 1.3.1]{Golub}\label{block product} If
				\[
				A=
				\begin{array}{c}
					\left[
					\begin{array}{ccc}
						A_{11} & \cdots & A_{1s}\\
						\vdots &        & \vdots\\
						A_{q1} & \cdots & A_{qs}
					\end{array}
					\right]
					\\[-2mm]
					\begin{array}{ccc}
						p_1 && p_s
					\end{array}
				\end{array}
				\begin{array}{c}
					m_1\\
					\\
					m_q
				\end{array},
				\qquad
				B=
				\begin{array}{c}
					\left[
					\begin{array}{ccc}
						B_{11} & \cdots & B_{1r}\\
						\vdots &        & \vdots\\
						B_{s1} & \cdots & B_{sr}
					\end{array}
					\right]
					\\[-2mm]
					\begin{array}{ccc}
						n_1 && n_r
					\end{array}
				\end{array}
				\begin{array}{c}
					p_1\\
					\\
					p_s
				\end{array},
				\]
				and we partition the product \(C=AB\) as follows,
				\[
				C=
				\begin{array}{c}
					\left[
					\begin{array}{ccc}
						C_{11} & \cdots & C_{1r}\\
						\vdots &        & \vdots\\
						C_{q1} & \cdots & C_{qr}
					\end{array}
					\right]
					\\[-2mm]
					\begin{array}{ccc}
						n_1 && n_r
					\end{array}
				\end{array}
				\begin{array}{c}
					m_1\\
					\\
					m_q
				\end{array},
				\]
				then for \(\alpha=1\!:\!q\) and \(\beta=1\!:\!r\) we have
				\[
				C_{\alpha\beta}
				=
				\sum_{\gamma=1}^{s}
				A_{\alpha\gamma}B_{\gamma\beta}.
				\]

			\end{lem}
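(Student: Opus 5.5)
The plan is to prove the block product formula by reducing it to the entrywise definition of the ordinary matrix product and then regrouping a finite sum according to the column partition of $A$, which is the same as the row partition of $B$. Throughout, $A$ has size $m\times p$ with $m=m_1+\cdots+m_q$ and $p=p_1+\cdots+p_s$, and $B$ has size $p\times n$ with $n=n_1+\cdots+n_r$. The hypothesis that makes the statement meaningful is that the column blocks of $A$ and the row blocks of $B$ share the same sizes $p_1,\dots,p_s$. This guarantees that each product $A_{\alpha\gamma}B_{\gamma\beta}$ is defined and has size $m_\alpha\times n_\beta$, which is exactly the size of $C_{\alpha\beta}$.

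First I set up offsets. Let $\mu_\alpha=m_1+\cdots+m_{\alpha-1}$, $\pi_\gamma=p_1+\cdots+p_{\gamma-1}$ and $\nu_\beta=n_1+\cdots+n_{\beta-1}$, each with empty sum $0$. By definition of the partition, the $(i,j)$ entry of $A_{\alpha\gamma}$ is $a_{\mu_\alpha+i,\,\pi_\gamma+j}$, and similarly for $B_{\gamma\beta}$ and $C_{\alpha\beta}$.

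Next, fix $\alpha,\beta$ and indices $1\le i\le m_\alpha$, $1\le j\le n_\beta$. The $(i,j)$ entry of $C_{\alpha\beta}$ is
\[
c_{\mu_\alpha+i,\,\nu_\beta+j}=\sum_{t=1}^{p} a_{\mu_\alpha+i,\,t}\, b_{t,\,\nu_\beta+j}.
\]
The index set $\{1,\dots,p\}$ is the disjoint union of the intervals $\{\pi_\gamma+1,\dots,\pi_\gamma+p_\gamma\}$ for $\gamma=1,\dots,s$. Splitting the sum along these intervals and writing $t=\pi_\gamma+l$ gives
\[
\sum_{\gamma=1}^{s}\sum_{l=1}^{p_\gamma}(A_{\alpha\gamma})_{il}(B_{\gamma\beta})_{lj}=\sum_{\gamma=1}^{s}(A_{\alpha\gamma}B_{\gamma\beta})_{ij}.
\]
Addition of matrices is entrywise, so this equals $\bigl(\sum_\gamma A_{\alpha\gamma}B_{\gamma\beta}\bigr)_{ij}$. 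Since $i$ and $j$ were arbitrary, the block identity follows.

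The only steps that need care are the bookkeeping of offsets and the check that the splitting is a genuine partition of the index set. Both are routine. The argument uses only associativity and commutativity of addition in $R$ and distributivity, and it never uses commutativity of multiplication. It therefore holds over any ring, and in particular over the finite commutative rings considered in this paper.
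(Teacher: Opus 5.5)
Your proof is correct and complete. The offsets $\mu_\alpha,\pi_\gamma,\nu_\beta$, together with the splitting of the inner index range $\{1,\dots,p\}$ along the common partition $p_1,\dots,p_s$, give the standard entrywise proof of block multiplication; the paper itself gives no proof and simply cites Golub and Van Loan. You are also right that no commutativity of multiplication is needed, so the lemma holds over any ring and covers its only later use, namely $[M\mid M_D'][M\mid M_D']^{T}=MM^{T}+M_D'(M_D')^{T}$ over $\mathbb{F}_q$.
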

			\begin{lem}\cite[Lemma 4.5.4]{Xing}\label{self dual characterisation} Let $C$ be an $[n,k]$-linear code over $\mathbb{F}_q$, with generator matrix
				$G$. Then $\mathbf{v}\in \mathbb{F}_q^n$ belongs to $C^\perp$ if and only if
				$\mathbf{v}$ is orthogonal to every row of $G$; i.e.,
				$
				\mathbf{v}\in C^\perp \iff \mathbf{v}G^{T}=0.
				$
				In particular, given an $(n-k)\times n$ matrix $H$, then $H$ is a
				parity-check matrix for $C$ if and only if the rows of $H$ are linearly
				independent and
				$
				HG^{T}=O.
				$
				
			\end{lem}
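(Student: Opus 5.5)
The statement is the standard characterization of the dual code. I would prove it first for the single-vector part and then deduce the parity-check part from it.

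\emph{Membership in $C^\perp$.} Let $g_1,\dots,g_k$ be the rows of $G$. Every codeword of $C$ has the form $c=aG=\sum_{i} a_i g_i$ with $a\in\mathbb{F}_q^k$. Since the standard inner product is bilinear,
\[
\mathbf{v}\cdot c=\sum_i a_i(\mathbf{v}\cdot g_i).
\]
Hence $\mathbf{v}$ is orthogonal to all of $C$ if and only if $\mathbf{v}\cdot g_i=0$ for every $i$. One direction uses that each $g_i$ lies in $C$; the other uses the displayed linearity. The $i$-th entry of $\mathbf{v}G^T$ is exactly $\mathbf{v}\cdot g_i$, so this condition is the same as $\mathbf{v}G^T=0$. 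This could also be read off from Lemma \ref{block product} with the trivial partition.

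\emph{Parity-check criterion.} I take as the definition that $H$ is a parity-check matrix of $C$ when $H$ is a generator matrix of $C^\perp$. I also use the standard fact that $\dim C^\perp=n-k$, which follows from rank–nullity: $C^\perp$ is the kernel of the map $\mathbf{v}\mapsto \mathbf{v}G^T$, and $G$ has rank $k$.

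\emph{Forward direction.} Suppose $H$ generates $C^\perp$. Its $n-k$ rows span a space of dimension $n-k$, so they are linearly independent. Each row lies in $C^\perp$, so by the first part each row $h$ satisfies $hG^T=0$; stacking these gives $HG^T=O$.

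\emph{Converse.} Suppose the rows of $H$ are independent and $HG^T=O$. Then each row $h$ satisfies $hG^T=0$, so each row lies in $C^\perp$ by the first part. The rows therefore span an $(n-k)$-dimensional subspace of $C^\perp$. Since $\dim C^\perp=n-k$, this subspace is all of $C^\perp$, and $H$ generates $C^\perp$.

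The only step that needs real care is the dimension count $\dim C^\perp=n-k$. It requires $G$ to have full rank $k$, so that the map $\mathbf{v}\mapsto \mathbf{v}G^T$ from $\mathbb{F}_q^n$ to $\mathbb{F}_q^k$ is surjective. Rank–nullity then gives a kernel of dimension $n-k$.
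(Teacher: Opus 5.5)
Your proof is correct. The paper gives no proof of its own here; it only cites Ling--Xing, and your argument is the standard one used there: orthogonality to each row of $G$ follows from bilinearity of the inner product, and then the count $\dim C^{\perp}=n-k$ (which, as you note, needs $G$ to have rank $k$) upgrades the inclusion of the row space of $H$ in $C^{\perp}$ to equality.
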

			\begin{thm}\label{self dual reversible} Let $F_q$ be a finite field of order $q$ and $M$ be a $k\times n$ matrix over $F_q$ such that $rank(M)=k$. Let $D$ be any $k\times k$ involutory matrix over $F_q$. Then the linear code generated by $[M\mid M_D']$ is a self dual reversible linear code if and only if the matrix $DMM^T$ is skew symmetric, where $M^T$ is the transpose of $M.$
			\end{thm}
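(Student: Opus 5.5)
The plan is to reduce self-duality of $C=\langle [M\mid M_D']\rangle$ to the single matrix identity $GG^T=O$ for $G=[M\mid M_D']$. Reversibility itself needs no new argument: by Theorem \ref{reversibility}, $C$ is reversible for every involutory $D$. So the whole content of the statement is the equivalence ``$C$ is self dual $\iff DMM^T$ is skew symmetric''. By Theorem \ref{reversible parameter}, $C$ has length $2n$ and dimension $k$. Hence self duality means $C\subseteq C^\perp$ together with $\dim C = n$.

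I will read the statement under the implicit hypothesis $k=n$, since that is forced by $\dim C=\dim C^\perp = 2n-k$. Without it, the argument below characterizes self-orthogonality. With $\dim C=\dim C^\perp$, Lemma \ref{self dual characterisation} shows that $C=C^\perp$ holds iff every row of $G$ is orthogonal to every row of $G$, that is, iff $GG^T=O$. I expect this dimension bookkeeping to be the main obstacle, because the statement does not spell it out.

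The core computation uses Lemma \ref{block product} with the $1\times 2$ block partition of $G$ and the $2\times 1$ block partition of $G^T$:
\[
GG^T = MM^T + (DMJ_n)(DMJ_n)^T = MM^T + DMJ_nJ_n^TM^TD^T .
\]
Since $J_n^T=J_n$ and $J_n^2=I_n$ by Lemma \ref{product of matrices}(1), this simplifies to $GG^T = MM^T + DMM^TD^T$. Thus $GG^T=O$ is equivalent to $MM^T=-DMM^TD^T$.

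Because $D$ is invertible with $D^{-1}=D$, I can multiply this identity on the left by $D$ and use $D^2=I_k$. This shows it is equivalent to $DMM^T = -MM^TD^T$. The right-hand side equals $-(DMM^T)^T$, because $MM^T$ is symmetric. So $GG^T=O$ holds iff $(DMM^T)^T=-DMM^T$, i.e.\ iff $DMM^T$ is skew symmetric. Each step is reversible, since left multiplication by $D$ can be undone by multiplying by $D$ again, so both directions of the equivalence follow at once. In characteristic $2$, ``skew symmetric'' is read literally as $A^T=-A$, and nothing in the argument changes. Combining this with the reversibility from Theorem \ref{reversibility} gives the theorem.
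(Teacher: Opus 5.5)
Your core computation is the paper's: Lemma \ref{block product} gives $GG^T=MM^T+DMM^TD^T$. Left multiplication by $D$ turns $GG^T=O$ into $DMM^T=-MM^TD^T$, and $MM^TD^T=(DMM^T)^T$ finishes it.

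Where you differ is the dimension bookkeeping, and there you are more careful than the paper. The paper passes from ``self dual'' to ``$GG^T=O$'' via Lemma \ref{self dual characterisation}. It never notices that $G$ has $k$ rows, so it can be a parity-check matrix of a length-$2n$ code of dimension $k$ only when $k=2n-k$.

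One correction to your framing: $k=n$ is not \emph{implied} by the hypotheses. It is forced only by the conclusion. So on the ``if'' side it is a genuinely new assumption, and without it that direction is false. For instance, over $\mathbb{F}_2$ take $M=(1\ 1)$ and $D=(1)$. Then $DMM^T=(0)$ is skew symmetric, yet $\langle(1\ 1\ 1\ 1)\rangle$ has a $3$-dimensional dual. It is therefore self-orthogonal but not self dual.

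So rather than calling $k=n$ an implicit reading, state plainly that you are proving a corrected theorem: either add the hypothesis $k=n$, or replace ``self dual'' by ``self-orthogonal''. Your argument is correct for either version. The same defect also sinks Corollary \ref{self duality preserved} and the example after it. A self dual $\langle M\rangle$ forces $k=n/2<n$, so $\langle[M\mid M_D']\rangle$ is only self-orthogonal there.
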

			\begin{proof} Let $G=[M\mid M_D']$, and $C$ be the linear code generated by $G.$ By Theorem \ref{reversibility}, the linear code generated by $G$ is reversible. Thus, it suffices to show that $C$ is self dual if and only if $DMM^T$ is skew symmetric. We know that a linear code is self dual if and only if its generator matrix is also a parity check matrix. Hence, by Lemma \ref{self dual characterisation}  \begin{align*}
					C \text{ is self-dual}
					&\iff GG^{T}=0 \\[1ex]
					&\iff [M\,|\,M_D'][M\,|\,M_D']^{T}=0 \\[1ex]
					&\iff MM^{T}+M_D M_D'^{T}=0
					\qquad \text{(by Lemma \ref{block product})} \\[1ex]
					&\iff MM^{T}+(DMJ_n)(DMJ_n)^{T}=0 \\[1ex]
					&\iff MM^{T}+DMJ_nJ_n^{T}M^{T}D^{T}=0 \\[1ex]
					&\iff MM^{T}+DMM^{T}D^{T}=0
					\qquad
					\left(
					\begin{array}{l}
						\text{since } J_n^{T}=J_n\\
						\text{and } J_n^{2}=I_d
					\end{array}
					\right) \\[1ex]
					&\iff DMM^{T}+MM^{T}D^{T}=0
					\qquad
					\left(
					\begin{array}{l}
						\text{multiplying both sides}\\
						\text{by } D \text{ and using } D^{2}=I_d
					\end{array}
					\right) \\[1ex]
					&\iff DMM^{T}+(DMM^{T})^{T}=0 \\[1ex]
					&\iff DMM^{T}=-(DMM^{T})^{T} \\[1ex]
					&\iff DMM^{T}
					\text{ is a skew-symmetric matrix.}
				\end{align*}
			\end{proof} 
			
			\begin{coro}\label{self duality preserved}
				Let $F_q$ be a finite field of order $q$ and $M$ be a $k\times n$ matrix over $F_q$ such that $rank(M)=k$. Let $D$ be any $k\times k$ involutory matrix over $F_q$. If the linear code generated by $M$ is self dual then the linear code generated by $[M\mid M_D']$ is a self dual reversible linear code.
			\end{coro}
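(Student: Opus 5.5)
The plan is to reduce the statement directly to Theorem \ref{self dual reversible}, so the only real work is verifying its hypothesis. Let $C_M$ denote the linear code generated by $M$. If $C_M$ is self dual, then in particular $C_M\subseteq C_M^{\perp}$. By Lemma \ref{self dual characterisation}, every row of $M$ lies in $C_M^{\perp}$, and hence $MM^{T}=O$, the $k\times k$ zero matrix.

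Next I substitute this into the criterion of Theorem \ref{self dual reversible}. We get $DMM^{T}=D\,O=O$, and $O^{T}=O=-O$. So $DMM^{T}$ is (trivially) skew symmetric for \emph{every} involutory $D$. Since $rank(M)=k$ and $D$ is involutory over $F_q$, all hypotheses of Theorem \ref{self dual reversible} hold. That theorem then gives that $\langle [M\mid M_D'] \rangle$ is a self dual reversible linear code.

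A more transparent alternative avoids quoting the skew-symmetry criterion and redoes the block computation from the proof of Theorem \ref{self dual reversible}. By Lemma \ref{block product},
\[
GG^{T}=MM^{T}+DMJ_nJ_n^{T}M^{T}D^{T}=MM^{T}+D(MM^{T})D^{T}=O+O=O .
\]
Reversibility comes separately from Theorem \ref{reversibility}, which needs no self-duality at all.

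The step I expect to be the main obstacle is the passage from $GG^{T}=O$ to genuine self-duality. The vanishing $GG^{T}=O$ only yields $\langle G\rangle\subseteq\langle G\rangle^{\perp}$, that is, self-orthogonality. Equality additionally requires $\dim\langle G\rangle = 2n/2 = n$. By Theorem \ref{reversible parameter}, however, $\dim\langle G\rangle=k$, and self-duality of $C_M$ forces $k=n/2$. So the dimension count must be checked explicitly rather than absorbed into the phrase ``the generator matrix is also a parity check matrix'' used in the proof of Theorem \ref{self dual reversible}. I would first try to settle the dimension issue. If it cannot be resolved under the stated hypotheses, the honest conclusion the argument supports is that $\langle [M\mid M_D'] \rangle$ is a self-orthogonal reversible code. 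This is exactly what the computation above yields, and it is consistent with the reasoning of Theorem \ref{self dual reversible}.
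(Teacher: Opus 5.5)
Your first two paragraphs are exactly the paper's proof: from $MM^{T}=O$ you get that $DMM^{T}=O$ is skew symmetric, and you then invoke Theorem \ref{self dual reversible}. But the obstacle you raise at the end is not an open issue. Your own dimension count settles it negatively, and the statement is false.

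Self-duality of $\langle M\rangle$ forces $n=2k$. On the other hand, $[M\mid M_D']$ has rank $k$, so $\langle [M\mid M_D']\rangle$ is a $[2n,n/2]$ code whose dual has dimension $3n/2$. For $k\ge 1$ it is therefore never self-dual.

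The error lies in Theorem \ref{self dual reversible}, which both you and the paper invoke. Its first step, ``$C$ is self dual $\iff GG^{T}=0$'', applies Lemma \ref{self dual characterisation} with $H=G$. That lemma requires a parity-check matrix of a $[2n,k]$ code to have $2n-k$ rows, which here means $k=n$, and this requirement is silently dropped. Without it, the chain of equivalences characterizes only self-orthogonality.

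A concrete counterexample: over $\mathbb{F}_2$ take $M=[1\ 1]$, which generates the self-dual code $\{00,11\}$, and $D=[1]$. Then $[M\mid M_D']=[1\ 1\ 1\ 1]$ generates the length-$4$ repetition code, whose dual is the $[4,3]$ even-weight code. The paper's own binary example likewise produces $[8,2]$ codes, not $[8,4]$ ones.

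So you should drop the appeal to Theorem \ref{self dual reversible} for self-duality, and replace your conditional ending with a definite verdict. What is true is that $\langle[M\mid M_D']\rangle$ is a self-orthogonal reversible $[2n,n/2]$ code. Your block computation $GG^{T}=MM^{T}+D(MM^{T})D^{T}=O$, together with Theorem \ref{reversibility}, proves exactly this.

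Likewise, Theorem \ref{self dual reversible} is correct with ``self-orthogonal'' in place of ``self dual''. It yields self-duality only when additionally $k=n$, i.e.\ $\langle M\rangle=F_q^{n}$. That case is incompatible with $\langle M\rangle$ being self-dual, so no repair keeps both the corollary's hypothesis and its conclusion.
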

			\begin{proof} By Theorem \ref{reversibility} the linear code generated by $[M\mid M_D']$ is reversible. Further, since the linear code generated by $M$ is self dual, therefore by Lemma \ref{self dual characterisation} $MM^T=0.$ This implies $DMM^T=0$. Since, null matrix is skew symmetric, therefore by Theorem \ref{self dual reversible}, the linear code generated by $[M\mid M_D']$ is self dual.
				
			\end{proof} 
			\begin{exl}
				Consider the binary field $\mathbb{Z}_2$. There are precisely 4 $2\times 2$ involutory matrices over $\mathbb{Z}_2,$ given by $
				D_1=
				\begin{pmatrix}
					1 & 0\\
					0 & 1
				\end{pmatrix}$, $ D_2=
				\begin{pmatrix}
					1 & 1\\
					0 & 1
				\end{pmatrix}
				$, $D_3=
				\begin{pmatrix}
					1 & 0\\
					1 & 1
				\end{pmatrix}
				$ and
				$D_4=
				\begin{pmatrix}
					0 & 1\\
					1 & 0
				\end{pmatrix}.
				$ Consider the $2\times 4$ binary matrix $M=\begin{pmatrix}
					1 & 1 & 1 & 1\\
					1 & 0 & 1 & 0
				\end{pmatrix}.$ Clearly the rank of $M$ is 2. Further it is easy to check that $MM^T=0$. Hence $M$ generates a binary self dual code. By Corollary \ref{self duality preserved}, the linear code generated by $[M\mid M_{D_i}']$ is self dual as well as reversible for each $i=1,2,3,4.$
			\end{exl}
			
			\section{Reversible DNA codes using concatenation of matrices}
			The DNA codewords over the alphabet $\{A,T,G,C \}$ are devoid of any inherent algebraic structure. Hence, we use a DNA correspondence map from an appropriate algebraic alphabet to DNA bases to apply the theory of algebraic coding to DNA coding. We utilize the DNA correspondence table for $\mathbb{F}_{16}$ developed by Oztas and Siap \cite{Siap}, also mentioned in \cite[Table 2]{Oztas}.

			\begin{table}[h]\label{table 1}
				\centering
				\caption{DNA correspondence over $\mathbb{F}_{16},$ \cite[Table 2]{Oztas}} \label{DNA table}
				\begin{tabular}{|c|c|c|c|} \hline
					
					Index & DNA Pair & $\mathbb{F}_{16}$ (Multiplicative ) & Additive Form \\ \hline
					
					1  & AA & $0$       & $0$ \\
					2  & TT & $w^0$     & $1$ \\
					3  & AT & $w^1$     & $w$ \\
					4  & GC & $w^2$     & $w^2$ \\
					5  & AG & $w^3$     & $w^3$ \\
					6  & TA & $w^4$     & $1+w$ \\
					7  & CC & $w^5$     & $w+w^2$ \\
					8  & AC & $w^6$     & $w^2+w^3$ \\
					9  & GT & $w^7$     & $1+w+w^3$ \\
					10 & CG & $w^8$     & $1+w^2$ \\
					11 & CA & $w^9$     & $w+w^3$ \\
					12 & GG & $w^{10}$  & $1+w+w^2$ \\
					13 & CT & $w^{11}$  & $w+w^2+w^3$ \\
					14 & GA & $w^{12}$  & $1+w+w^2+w^3$ \\
					15 & TG & $w^{13}$  & $1+w^2+w^3$ \\
					16 & TC & $w^{14}$  & $1+w^3$ \\ \hline
					
				\end{tabular}
			\end{table}
			We recall, remarks from \cite[Section IV]{Oztas} as follows: let $\Phi$ represent the bijective mapping that defines DNA correspondence for the elements of the field. For instance,
			$\Phi(w^{11}) = CT$ over $\mathbb{F}_{16}$. If we add $1$ to any element of
			$\mathbb{F}_{16}$, then its image under $\Phi$ produces a complement of
			the DNA correspondence. For example,
			$\Phi(w^{11} + 1) = \Phi(w^{12}) = GA$. Additionally, taking the
			$4^{t}$-th power of the elements produces the reverse of the DNA correspondence. For
			instance, $\Phi((w^{11})^{4}) = \Phi(w^{14}) = TC$. This map
			$\Phi$ can be extended to codewords and codes.
			\begin{rmk}
				In \cite{Oztas}, authors raised an open problem: " Let $C$ be any non-reversible linear code over the field $F_{4^{2t}}$. Then, $\Phi(C)$ is a reversible DNA code, where $\Phi$ denotes the DNA correspondence map." We observe that the answer to the open problem is negative. In support, we give the following example.
			\end{rmk}
			
			\begin{exl}\label{open problem example}
				Consider the finite field $F_{4^{2t}}=\mathbb{F}_{16}$, that is, $t=1$ and the linear code
				\[
				C = \{(x,0) : x \in \mathbb{F}_{16}\} \subseteq \mathbb{F}_{16}^2.
				\]
				Clearly, $C$ is a $1$-dimensional linear code over $\mathbb{F}_{16}$.
				
				We first show that $C$ is not reversible. Recall that a code $C$ is said to be reversible if for every $(c_1,c_2) \in C$, the reversed vector $(c_2,c_1)$ also belongs to $C$. Now, take any nonzero element $x \in \mathbb{F}_{16}$. Then $(x,0) \in C$, but its reversal is $(0,x),$ which is not of the form $(y,0)$ for any $y \in \mathbb{F}_{16}$ unless $x=0$. Hence, $(0,x) \notin C$ for $x \neq 0$, and therefore $C$ is not reversible.
				
				Now consider the DNA correspondence map $\Phi : \mathbb{F}_{16} \to \{\text{A,T,G,C}\}^2$ defined as in \cite[Table 2]{Oztas}. Extending $\Phi$ coordinate-wise, we obtain
				$
				\Phi(C) = \{(\Phi(x), \Phi(0)) : x \in \mathbb{F}_{16}\}.
				$
				From the correspondence table, we have $\Phi(0) = \mathrm{AA}$. Hence,
				$
				\Phi(C) = \{(\Phi(x), \mathrm{AA}) : x \in \mathbb{F}_{16}\}.
				$
				
				We now check whether $\Phi(C)$ is reversible. Take any element $(\Phi(x), \mathrm{AA}) \in \Phi(C)$. Its reversal is
				$
				(\mathrm{AA}, \Phi(x)^r)
				$, where $\Phi(x)^r$ is the reverse of $\Phi(x).$
				For $\Phi(C)$ to be reversible, this reversed vector must also belong to $\Phi(C)$. That is, there must exist some $y \in \mathbb{F}_{16}$ such that
				$
				(\mathrm{AA}, \Phi(x)^r) = (\Phi(y), \mathrm{AA}).
				$
				Comparing coordinates, we obtain
				$
				\Phi(y) = \mathrm{AA} \quad \text{and} \quad \Phi(x)^r = \mathrm{AA}.
				$
				Since $\Phi$ is a bijection, this implies $y = 0$ and $x = 0$. Therefore, the reversed vector belongs to $\Phi(C)$ only in the trivial case.
				
				Thus, for any $x \neq 0$, we have
				$
				(\mathrm{AA}, \Phi(x)) \notin \Phi(C),
				$
				which shows that $\Phi(C)$ is not reversible.
			\end{exl}
			\begin{thm}\label{same parameters DNA}
				Let $M$ be a $k\times n$ matrix over a finite field $\mathbb{F}_{2^r}$ such that $rank(M)=k$ and $D$ be any $k\times k$ involutory matrix over $\mathbb{F}_{2^r}$. Then the linear codes generated by $M$ and $(M_D')^{o2^t}$ have the same code parameters $[n,k,d],$ where $2^t< 2^r-1$. 
			\end{thm}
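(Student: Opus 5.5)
The plan is to reduce to Theorem \ref{parameters} and then show that the coordinatewise Frobenius map $x\mapsto x^{2^t}$ preserves all three parameters. By Theorem \ref{parameters}, the codes generated by $M$ and by $M_D'$ already share the parameters $[n,k,d]$. So it is enough to prove that $\langle N\rangle$ and $\langle N^{\circ 2^t}\rangle$ have the same parameters for any full-rank $k\times n$ matrix $N$ over $\mathbb{F}_{2^r}$, and then apply this with $N=M_D'$.

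The key tool is the map $\sigma:\mathbb{F}_{2^r}\to\mathbb{F}_{2^r}$, $\sigma(x)=x^{2^t}$. In characteristic $2$ this is a power of the Frobenius, hence a field automorphism. It is additive and multiplicative, it is bijective because the field is finite and $\sigma$ is injective, and $\sigma(x)=0$ iff $x=0$. We extend $\sigma$ coordinatewise to vectors and matrices, so that $\sigma(N)=N^{\circ 2^t}$.

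For any $a\in\mathbb{F}_{2^r}^k$, the homomorphism property gives $\sigma(aN)=\sigma(a)\,N^{\circ 2^t}$. Since $\sigma$ is bijective on $\mathbb{F}_{2^r}^k$, as $a$ runs over $\mathbb{F}_{2^r}^k$ so does $\sigma(a)$. Hence
\[
\langle N^{\circ 2^t}\rangle=\sigma(\langle N\rangle).
\]
From this identity the parameters follow one at a time:
\begin{enumerate}[(1)]
\item The length is clearly $n$.
\item The size is preserved because $\sigma$ is a bijection. Equivalently, the rank is preserved: a relation $\sum b_i\sigma(N_i)=0$ pulls back under $\sigma^{-1}$ to $\sum\sigma^{-1}(b_i)N_i=0$, which forces $b=0$, so $N^{\circ 2^t}$ still has rank $k$.
\item The minimum distance is preserved because $\sigma$ fixes $0$ and maps nonzero elements to nonzero elements, so every codeword and its image have the same Hamming weight.
\end{enumerate}
Combining these with Theorem \ref{parameters} gives the claim.

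The main obstacle is the linearity step $\sigma(aN)=\sigma(a)N^{\circ2^t}$. It relies on additivity of $x\mapsto x^{2^t}$, which holds because the characteristic is $2$. I would state this explicitly, noting that $(x+y)^2=x^2+y^2$ and iterating $t$ times. The hypothesis $2^t<2^r-1$ is not needed for the argument beyond ensuring the power is meaningful; $\sigma$ is an automorphism for every $t$. The proof needs no hypothesis on $D$ beyond what Theorem \ref{parameters} already uses.
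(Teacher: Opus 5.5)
Your proposal is correct and follows essentially the same route as the paper: Theorem~\ref{parameters} handles the passage from $M$ to $M_D'$, and the coordinatewise map $x\mapsto x^{2^t}$ then gives a weight-preserving correspondence between $\langle M_D'\rangle$ and $\langle (M_D')^{\circ 2^t}\rangle$, which is the paper's $c\mapsto c'$ argument. Your identity $\langle N^{\circ 2^t}\rangle=\sigma(\langle N\rangle)$ makes explicit the bijectivity that the paper's ``same size, hence same weight distribution'' step leaves implicit, and you are right that the hypothesis $2^t<2^r-1$ is superfluous.
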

			\begin{proof} Since $M$ and $(M_D')^{o2^t}$ have the same order $k\times n$, the linear codes generated by $M$ and $(M_D')^{o2^t}$ have the same length $n$. Further, it is a standard result that the linear independence of vectors is preserved by a Frobenius automorphism. Since, $2^t< 2^r-1$, the map $x\longrightarrow x^{2^t}$ is a Frobenius automorphism of $F_{2^r}$. Therefore the linear codes generated by $M$ and $(M_D')^{o2^t}$ have the same dimension $k$.
				
				Next we show that both the codes have same distance parameter. By Theorem \ref{parameters}, by putting $R=\mathbb{F}_{2^r}$, the linear codes generated by $M$ and $M_D'$ have the same minimum distance. Let $g^1,g^2,...,g^k$ be the $k$ rows of the matrix $M_D'$. Then the rows of the matrix $(M_D')^{o2^t}$ are $(g^1)^{o2^t},(g^2)^{o2^t},...,(g^k)^{o2^t}$, where $(g^i)^{o2^t}=((g^i_j)^{2^t})_n$. Let $c=\sum_{i=1}^{k}\alpha_ig^i$ be a codeword of weight $w$ in the linear code generated by $M_D'$. For arbitrary $i,l$ and $j$ such that $1\leq i,l\leq k$ and $1\leq j\leq n$ consider the entries $g^i_j$ and $g^l_j$ of the matrix $M_D'$. If $\alpha g^i_j + \beta g^l_j\neq 0$ for some scalars $\alpha$ and $\beta$ then $(\alpha g^i_j)^{2^t} + (\beta g^l_j)^{2^t}=(\alpha g^i_j + \beta g^l_j)^{2^t}\neq 0$, as the characterstic of the field $\mathbb{F}_{2^r}$ is $2$. 
				
				Conversely, if $\alpha g^i_j + \beta g^l_j= 0$ then $(\alpha g^i_j)^{2^t} + (\beta g^l_j)^{2^t}=(\alpha g^i_j + \beta g^l_j)^{2^t}=0$. Now consider the scalars $(\alpha_i)^{2^t}$. Then $c'=\sum_{i=1}^{k}(\alpha_i)^{2^t}(g^i)^{2^t}$ being a linear combination of the rows of the matrix $(M_D')^{o2^t}$ is a codeword of the linear code generated by $(M_D')^{o2^t}$. Also it has the same number of non zero components as $c$. Thus for every codeword $c$ of weight $w$ in the linear code generated by $M_D'$ there corresponds a codeword $c'$ of weight $w$ in the linear code generated by $(M_D')^{o2^t}$. Since both the linear codes have the same size, that is, parameter $k$, therefore they have exactly the same weight distribution. Hence, the minimum Hamming distance for both the codes is same. 
			\end{proof}
			
			\begin{coro}\cite[Theorem 4]{Oztas}
					Let $M$ be a $k \times n$ matrix over $F_{2^{r}}$ with
					$rank(M)=k$. Then, the codes $\langle M \rangle$ and
					$\langle (M'')^{o2^{t}} \rangle$ have same $[n,k,d]$
					parameters, where $2^{t} < 2^{r}-1$.
			\end{coro}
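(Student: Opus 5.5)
This is the special case $D=J_k$ of Theorem \ref{same parameters DNA}. I would reduce it to that theorem rather than argue again from scratch.

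First I check that the hypotheses match. Take $R=\mathbb{F}_{2^r}$ and $D=J_k$. By Lemma \ref{product of matrices} (1), $J_k$ is a $k\times k$ involutory matrix over $\mathbb{F}_{2^r}$. The rank condition $rank(M)=k$ and the bound $2^t<2^r-1$ are exactly those of Theorem \ref{same parameters DNA}.

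Next I identify the matrices. By Definition \ref{D-reflection definition}, $M_{J_k}'=J_kMJ_n$. By Lemma \ref{product of matrices} (5), this equals $M''$. Taking the Hadamard $2^t$-power of both sides, which is an entrywise operation, gives $(M_{J_k}')^{o2^t}=(M'')^{o2^t}$. Theorem \ref{same parameters DNA} with $D=J_k$ then says that $\langle M\rangle$ and $\langle (M'')^{o2^t}\rangle$ have the same $[n,k,d]$ parameters.

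The only point that needs care is making sure the identification $M_{J_k}'=M''$ goes through the Hadamard power unchanged. This is immediate: $M_{J_k}'$ and $M''$ are literally equal matrices, so applying any entrywise map to both yields equal results. No separate weight-distribution argument is required, since the proof of Theorem \ref{same parameters DNA} (via the Frobenius map $x\mapsto x^{2^t}$ preserving rank and weights, together with Theorem \ref{parameters}) already covers every involutory $D$, including $J_k$.
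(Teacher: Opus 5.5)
Your proposal is correct and matches the paper's approach: the paper obtains this corollary by setting $D=J_k$ in Theorem \ref{same parameters DNA} and using $J_kMJ_n=M''$ from Lemma \ref{product of matrices}~(5). Your explicit checks that $J_k$ is involutory and that the Hadamard power respects the equality $M'_{J_k}=M''$ spell out what the paper leaves implicit.
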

			
			\begin{coro}\label{same parameter DNA coro}Let $M$ be a $k\times n$ matrix over a finite field $\mathbb{F}_{4^{2t}}$ such that $rank(M)=k$ and $D$ be any $k\times k$ involutory matrix over $\mathbb{F}_{4^{2t}}$. Then the linear codes generated by $M$ and $(M_D')^{o4^t}$ have the same code parameters $[n,k,d].$
			\end{coro}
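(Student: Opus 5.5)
This corollary should follow from Theorem \ref{same parameters DNA} by specialization. Write $\mathbb{F}_{4^{2t}}=\mathbb{F}_{2^{r}}$ with $r=4t$, since $4^{2t}=2^{4t}$. Then note that $4^t=2^{2t}$, so the Hadamard power $(M_D')^{o4^t}$ is exactly $(M_D')^{o2^{s}}$ with $s=2t$. Theorem \ref{same parameters DNA} applies to $M$ and $D$ over $\mathbb{F}_{2^r}$ with exponent $2^s$, provided its hypothesis $2^s<2^r-1$ holds. In that case the codes $\langle M\rangle$ and $\langle (M_D')^{o4^t}\rangle$ have the same parameters $[n,k,d]$.

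The only thing to check is the numerical condition, namely $2^{2t}<2^{4t}-1$ for every positive integer $t$. Since $t\ge 1$, we have $2^{4t}=2^{2t}\cdot 2^{2t}\ge 4\cdot 2^{2t}$. Hence $2^{4t}-1\ge 4\cdot 2^{2t}-1>2^{2t}$, which is the required inequality.

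It is also worth confirming that the map $x\mapsto x^{4^t}$ really is a field automorphism of $\mathbb{F}_{4^{2t}}$, as the proof of Theorem \ref{same parameters DNA} uses. This map is a power of the Frobenius map $x\mapsto x^2$, so it is additive, multiplicative and bijective on the field. It is in fact the involutory Frobenius map, since $(x^{4^t})^{4^t}=x^{4^{2t}}=x$. So the dimension and weight-distribution arguments of Theorem \ref{same parameters DNA} go through unchanged.

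There is no real obstacle here. The main care needed is in the bookkeeping: the $t$ in the corollary is not the $t$ of Theorem \ref{same parameters DNA}. The theorem's exponent parameter must be taken to be $2t$, and its field exponent $r$ to be $4t$.
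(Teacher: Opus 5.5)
Your proposal is correct and matches the paper, which states this corollary without a separate proof as the specialization of Theorem \ref{same parameters DNA} to $r=4t$ with exponent $4^t=2^{2t}$; your check that $2^{2t}<2^{4t}-1$ and your care with the two different roles of $t$ are the only things needed. Strictly speaking, that inequality is not even essential, since $x\mapsto x^{2^s}$ is a field automorphism of $\mathbb{F}_{2^r}$ for every $s\ge 0$, as your own Frobenius remark shows.
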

			\begin{coro}\cite[Corollary 1]{Oztas} Let $M$ be a $k \times n$ matrix over $F_{4^{2t}}$ with $rank(M)=k$. Then, the codes $\langle M \rangle$ and	$\langle (M'')^{o^{4^{t}}} \rangle$ have same $[n,k,d]$ parameters.
			\end{coro}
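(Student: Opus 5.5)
The statement is \cite[Corollary 1]{Oztas}. I would obtain it as the special case $D=J_k$ of Corollary \ref{same parameter DNA coro}, which in turn is the specialization $r=4t$ of Theorem \ref{same parameters DNA}.

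First, by Lemma \ref{product of matrices} (1), $J_k$ is an involutory $k\times k$ matrix over $\mathbb{F}_{4^{2t}}$. By Lemma \ref{product of matrices} (5), together with Definition \ref{D-reflection definition}, we have $M'_{J_k}=J_kMJ_n=M''$. Hence $(M'')^{o4^t}=(M'_{J_k})^{o4^t}$. Corollary \ref{same parameter DNA coro}, applied with $D=J_k$, then gives that $\langle M\rangle$ and $\langle (M'')^{o4^t}\rangle$ have the same parameters $[n,k,d]$.

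To be safe, I would also check that Corollary \ref{same parameter DNA coro} really follows from Theorem \ref{same parameters DNA}. Write $\mathbb{F}_{4^{2t}}=\mathbb{F}_{2^{4t}}$, so $r=4t$ and the exponent is $4^t=2^{2t}$. The hypothesis $2^{2t}<2^{4t}-1$ holds for every $t\ge 1$, since $2^{4t}-2^{2t}=2^{2t}(2^{2t}-1)\ge 12>1$.

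The only point needing care is whether $x\mapsto x^{4^t}$ is a field automorphism, because that is what the proof of Theorem \ref{same parameters DNA} uses. It is, being a power of the Frobenius map of $\mathbb{F}_{2^{4t}}$. No step is a real obstacle: the argument is bookkeeping, matching $M''$ with $M'_{J_k}$ and matching the exponent with $2^t$ in Theorem \ref{same parameters DNA}.
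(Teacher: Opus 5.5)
Your proposal is correct and follows the paper's route. The paper obtains this corollary by taking $D=J_k$, so that $M'_{J_k}=J_kMJ_n=M''$, in Corollary \ref{same parameter DNA coro}, which is itself Theorem \ref{same parameters DNA} with $r=4t$ and exponent $4^t=2^{2t}$; your check that $2^{2t}<2^{4t}-1$ for $t\ge 1$ just spells out bookkeeping the paper leaves implicit.
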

			Next, we give the main result which shows the application of concatenation of appropriate matrices to generate a linear code over $F_{4^{2t}}$ such that its $\Phi$ image is a reversible DNA code. First, we give the following results highlighting the relation between Hadmard power and frobenius automorphisms. Also, Watson-Crick complement of any $\Phi(\alpha)$ in DNA pairs.
			
			In general, Hadamard power does not distribute over matrices product, that is, $(AB)^{os}\neq A^{os}B^{os}.$ For example, let
			$
			A=
			\begin{pmatrix}
				1 & 2\\
				0 & 1
			\end{pmatrix}
			$ and
			$B=
			\begin{pmatrix}
				2 & 1\\
				3 & 0
			\end{pmatrix},
			$ be two real $2 \times 2$ matrices. Then $ A^{\circ 2}
			=
			\begin{pmatrix}
				1 & 4\\
				0 & 1
			\end{pmatrix}$ and $B^{\circ 2}
			=
			\begin{pmatrix}
				4 & 1\\
				9 & 0
			\end{pmatrix}.$
			Now,
			$
			AB=
			\begin{pmatrix}
				8 & 1\\
				3 & 0
			\end{pmatrix}.
			$
			Hence
			$
			(AB)^{\circ 2}
			=
			\begin{pmatrix}
				64 & 1\\
				9 & 0
			\end{pmatrix}.
			$
			On the other hand,
			$
			A^{\circ 2}B^{\circ 2}
			=
			\begin{pmatrix}
				1 & 4\\
				0 & 1
			\end{pmatrix}
			\begin{pmatrix}
				4 & 1\\
				9 & 0
			\end{pmatrix}
			=
			\begin{pmatrix}
				40 & 1\\
				9 & 0
			\end{pmatrix}.
			$
			Therefore,
			$
			(AB)^{\circ 2}
			\neq
			A^{\circ 2}B^{\circ 2}.
			$ 
			
			However, in case of Frobenius (Hadamard) power it holds.

			\begin{lem}\label{frobenius}
				Let $\mathbb{F}_q$ be a finite field of characteristic $p$, where $q=p^m$, and let $t$ be a nonnegative integer. For matrices $A \in M_{m \times k}(\mathbb{F}_q)$ and $B \in M_{k \times n}(\mathbb{F}_q)$, consider the Frobenius (Hadamard) power $A^{(p^t)}$ by applying the map $x \mapsto x^{p^t}$ on $\mathbb{F}_q$ on each entry of $A.$ Then $(AB)^{(p^t)} = A^{(p^t)} B^{(p^t)}$.
			\end{lem}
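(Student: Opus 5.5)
The plan is to prove the identity entrywise, since both sides are $m\times n$ matrices over $\mathbb{F}_q$. Write $\sigma(x)=x^{p^t}$. The first step is to record that $\sigma$ is a ring endomorphism of $\mathbb{F}_q$, which is all the proof needs.

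Multiplicativity, $(xy)^{p^t}=x^{p^t}y^{p^t}$, holds because $\mathbb{F}_q$ is commutative. Additivity, $(x+y)^{p^t}=x^{p^t}+y^{p^t}$, I will prove by induction on $t$. The base case $t=0$ is trivial. For $t=1$, expand $(x+y)^p$ by the binomial theorem and note that $p\mid\binom{p}{i}$ for $0<i<p$. Since $\operatorname{char}\mathbb{F}_q=p$, all the middle terms vanish. The inductive step follows from $x^{p^{t+1}}=(x^{p^t})^p$. A further induction on the number of summands extends additivity to finite sums:
\[
\Bigl(\sum_{l=1}^{k} x_l\Bigr)^{p^t}=\sum_{l=1}^{k} x_l^{p^t}.
\]

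Next comes the entrywise computation. Let $A=[a_{il}]$ and $B=[b_{lj}]$, so that $(AB)_{ij}=\sum_{l=1}^{k}a_{il}b_{lj}$. By definition of the Frobenius (Hadamard) power,
\[
\bigl((AB)^{(p^t)}\bigr)_{ij}=\Bigl(\sum_{l=1}^{k}a_{il}b_{lj}\Bigr)^{p^t}.
\]
Applying additivity and then multiplicativity of $\sigma$ turns this into $\sum_{l=1}^{k}a_{il}^{p^t}b_{lj}^{p^t}$. That sum is exactly the $(i,j)$ entry of $A^{(p^t)}B^{(p^t)}$. Since this holds for all $i,j$, the matrix identity follows.

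I do not expect a real obstacle. The only non-formal ingredient is additivity of $x\mapsto x^{p^t}$, which depends on the characteristic being $p$; this is precisely why the real-matrix counterexample before the lemma fails for the ordinary Hadamard square. The one point of care is that the letter $m$ is used both in $q=p^m$ and as the row count of $A$. These are unrelated, and the argument uses neither.
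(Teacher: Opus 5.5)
Your proposal is correct and matches the paper's proof: both verify the identity entrywise, applying additivity and then multiplicativity of $x\mapsto x^{p^t}$ to $\sum_{l} a_{il}b_{lj}$. The only difference is that you prove additivity yourself, via the binomial theorem and induction on $t$ and on the number of summands, whereas the paper simply cites that the Frobenius map is a field automorphism.
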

			
			\begin{proof}
				We begin by recalling the Frobenius map defined by $x\mapsto x^{p^t}$ which is a field automorphism. In particular, for all $x,y \in \mathbb{F}_q$, it satisfies $(x+y)^{p^t} = x^{p^t} + y^{p^t}$ and $(xy)^{p^t} = x^{p^t}y^{p^t}$. 
				
				Let $A=(a_{ir}) \in M_{m \times k}(\mathbb{F}_q)$ and $B=(b_{rj}) \in M_{k \times n}(\mathbb{F}_q)$. Then the $(i,j)$-th entry of the product $AB$ is given by $(AB)_{ij} = \sum_{r=1}^{k} a_{ir} b_{rj}$. Applying the Frobenius map on each entry of $AB$, we obtain $(AB)_{ij}^{p^t} = \left(\sum_{r=1}^{k} a_{ir} b_{rj}\right)^{p^t}.$
				
				Using the additivity of the Frobenius map, this becomes $(AB)_{ij}^{p^t} = \sum_{r=1}^{k} (a_{ir} b_{rj})^{p^t}$. Further, by multiplicativity, we have $(AB)_{ij}^{p^t} = \sum_{r=1}^{k} a_{ir}^{p^t} b_{rj}^{p^t}$. 
				
				On the other hand, consider the matrices $A^{(p^t)}=(a_{ir}^{p^t})$ and $B^{(p^t)}=(b_{rj}^{p^t})$. Their product has $(i,j)$-th entry $(A^{(p^t)}B^{(p^t)})_{ij} = \sum_{r=1}^{k} a_{ir}^{p^t} b_{rj}^{p^t}$. Comparing the two expressions, we see that $(AB)_{ij}^{p^t} = (A^{(p^t)}B^{(p^t)})_{ij}$ for all $i,j$. Hence $(AB)^{(p^t)} = A^{(p^t)} B^{(p^t)}.$
			\end{proof}
			We restate the following lemma given in \cite{Oztas}.
			
			\begin{lem}\cite[Lemma 1]{Oztas}\label{reverse and complement}: For any $\alpha \in F_{4^{2t}}$, $\Phi(\alpha^{4^t})=\Phi(\alpha)^r$ and $\Phi(\alpha+1)=\Phi(\alpha)^c$, where $\Phi(\alpha)^r$ is the reverse of $\Phi(\alpha)$ and $\Phi(\alpha)^c$ is the Watson-Crick complement.
			\end{lem}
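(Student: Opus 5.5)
Since the lemma is quoted from \cite[Lemma 1]{Oztas} and $\Phi$ is given explicitly only through Table \ref{DNA table} (the case $t=1$, $\mathbb{F}_{4^{2}}=\mathbb{F}_{16}$), I would prove it in two layers: first a direct verification for $\mathbb{F}_{16}$, then a structural argument explaining why a correspondence with these two properties exists over $\mathbb{F}_{4^{2t}}$ in general. For $t=1$, the map $\sigma:\alpha\mapsto\alpha^{4}$ acts on nonzero elements by $w^i\mapsto w^{4i \bmod 15}$. I would run through the table and check that $\Phi(w^{4i})$ is the reversal of $\Phi(w^i)$; for example $w^{11}\mapsto w^{44}=w^{14}$, and indeed CT $\mapsto$ TC. The fixed points of $\sigma$ form the subfield $\mathbb{F}_4=\{0,1,w^5,w^{10}\}$, and these correspond exactly to the palindromes AA, TT, CC, GG, which is consistent. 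For the complement I would use the additive column of the table. Adding $1$ toggles the constant term, so one checks sixteen pairs such as $w\leftrightarrow w^4$ (AT $\leftrightarrow$ TA) and $w^{11}\leftrightarrow w^{12}$ (CT $\leftrightarrow$ GA), verifying in each case that the images are Watson--Crick complements letter by letter.

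For general $t$, I would isolate the algebraic structure that such a $\Phi$ must respect. The map $\sigma(\alpha)=\alpha^{4^t}$ is the generator of $\mathrm{Gal}(\mathbb{F}_{4^{2t}}/\mathbb{F}_{4^t})$, so $\sigma^2=\mathrm{id}$ and its fixed set is $\mathbb{F}_{4^t}$. The translation $\tau(\alpha)=\alpha+1$ satisfies $\tau^2=\mathrm{id}$ because the characteristic is $2$, and it has no fixed points. The two maps commute, since $(\alpha+1)^{4^t}=\alpha^{4^t}+1$ by additivity of Frobenius (as in Lemma \ref{frobenius}). Hence $\langle\sigma,\tau\rangle$ is a Klein four-group acting on $\mathbb{F}_{4^{2t}}$. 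On the DNA side, reversal $r$ and complement $c$ on words of length $2t$ also generate a commuting pair of involutions, $c$ being fixed-point free. The lemma is then equivalent to saying that $\Phi$ is an equivariant bijection: $\Phi\circ\sigma=r\circ\Phi$ and $\Phi\circ\tau=c\circ\Phi$.

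To finish, I would compare the orbit structures of the two actions. On the field side, I would classify orbits according to whether $\sigma$, $\sigma\tau$, or neither has a fixed point in the orbit, using $\sigma$-fixed points $=\mathbb{F}_{4^t}$ and solving $\alpha^{4^t}=\alpha+1$ (a coset of $\mathbb{F}_{4^t}$ inside the trace-type kernel). On the DNA side, the corresponding counts are palindromes, reverse-complement palindromes, and generic words. If the counts match, $\Phi$ can be defined orbit by orbit, sending a chosen representative to a representative and extending equivariantly. This is exactly the content of the Oztas--Siap correspondence \cite{Siap}, which Table \ref{DNA table} realises for $t=1$.

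The main obstacle is that the statement is really a property of the specific $\Phi$ fixed in \cite{Siap,Oztas}, not of an arbitrary bijection. So the honest proof is either the finite table check, or an appeal to the construction of $\Phi$ in \cite{Siap}. For the orbit-count matching in general $t$, I would first try to show that the number of solutions of $\alpha^{4^t}=\alpha+1$ equals $4^t$, which matches the number of reverse-complement palindromes of length $2t$.
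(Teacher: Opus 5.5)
\paragraph{Verdict.} Your proposal is correct, but it takes a different route from the paper, because the paper gives no proof of this lemma at all. It imports the lemma from \cite[Lemma 1]{Oztas} as a property of the DNA correspondence fixed in \cite{Siap,Oztas}. The only instance of $\Phi$ actually written down here is Table \ref{DNA table}.

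\paragraph{The case $t=1$.} Your table check is a complete, self-contained proof, and it goes through for every entry. The map $w^i\mapsto w^{4i \bmod 15}$ reverses each pair, and its fixed points $0,1,w^5,w^{10}$ give exactly AA, TT, CC, GG. Adding $1$ complements each pair. A small correction: there are eight unordered pairs $\{\alpha,\alpha+1\}$ to check, not sixteen.

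\paragraph{The case $t\ge 2$.} Your orbit argument proves that \emph{some} bijection $\Phi$ with $\Phi\circ\sigma=r\circ\Phi$ and $\Phi\circ\tau=c\circ\Phi$ exists. It does not prove that the particular map of \cite{Siap} has these properties. You note this yourself. Since the paper never defines $\Phi$ beyond $\mathbb{F}_{16}$, showing that the lemma is a consistently satisfiable requirement on $\Phi$ is the most the paper's own content supports.

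\paragraph{The count you left open.} It is immediate. The map $T(\alpha)=\alpha^{4^t}+\alpha$ is $\mathbb{F}_{4^t}$-linear from $\mathbb{F}_{4^{2t}}$ into $\mathbb{F}_{4^t}$ with kernel $\mathbb{F}_{4^t}$, hence onto. So the solution set of $\alpha^{4^t}=\alpha+1$ is the coset $T^{-1}(1)$ of $\mathbb{F}_{4^t}$, not a subset of the kernel as you phrased it. It has exactly $4^t$ elements, matching the $4^t$ reverse-complement palindromes of length $2t$. Also, $|\mathbb{F}_{4^t}|=4^t$ equals the number of palindromes of length $2t$, and both sides have $4^{2t}$ elements. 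Hence orbits with stabilizer $\langle\sigma\rangle$, with stabilizer $\langle\sigma\tau\rangle$, and with trivial stabilizer occur equally often on both sides. That is exactly what makes the two Klein four-group sets isomorphic.

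\paragraph{What each approach buys.} The citation gives the paper the lemma for the concrete $\Phi$ used in Theorem \ref{main theorem}, for every $t$. Your approach independently verifies the one explicit case and explains why the two identities are compatible for all $t$.
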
			
			\begin{thm}\label{main theorem}Let $M$ be a $k\times n$ matrix over a field $F_{4^{2t}}$ such that $rank(M)=k$ and $D$ be any $k\times k$ involutory matrix over $F_{4^{2t}}$ such that $D^{o4^t}=D.$ Let $C$ be the linear code generated by $[M\mid (M'_D)^{o4^t}]$. Then the code $\Phi(C)$ is reversible. Moreover, the code parameters of $C$ are $2n,\:k,\:d'$ where $d'\geq 2d$ and $n,k,d$ are the code parameters of the linear code generated by $M.$
			\end{thm}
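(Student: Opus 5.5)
Let $q=4^{2t}$ and $\sigma(x)=x^{4^t}$, so that $\sigma$ is a field automorphism of $\mathbb{F}_q$ with $\sigma^2=\mathrm{id}$; hence $(A^{o4^t})^{o4^t}=A$ for every matrix $A$. The plan is to mirror the proof of Theorem \ref{reversibility}, inserting Hadamard $4^t$-powers where Lemma \ref{reverse and complement} demands them.

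\emph{Reversibility.} Let $x=a[M\mid (DMJ_n)^{o4^t}]$ with $a\in\mathbb{F}_q^k$, so $x=[u\mid v]$ with $u=aM$ and $v=a(DMJ_n)^{o4^t}$.
\begin{itemize}
\item By Lemma \ref{frobenius} and the hypothesis $D^{o4^t}=D$, together with $J_n^{o4^t}=J_n$ (its entries are $0,1$), we get $(DMJ_n)^{o4^t}=DM^{o4^t}J_n$, so $v=aDM^{o4^t}J_n$.
\item Extending $\Phi$ coordinatewise, the DNA reverse of $\Phi(x)$ reverses the order of the $2n$ coordinates and reverses each DNA pair. By Lemma \ref{reverse and complement}, $\Phi(x)^r=\Phi\bigl((xJ_{2n})^{o4^t}\bigr)$. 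So it suffices to show that $y:=(xJ_{2n})^{o4^t}$ lies in $C$.
\item By Lemma \ref{product of matrices}(6), $xJ_{2n}=[vJ_n\mid uJ_n]=[aDM^{o4^t}\mid aMJ_n]$.
\item Applying $\sigma$ entrywise, using Lemma \ref{frobenius} and $\sigma^2=\mathrm{id}$, gives
\[
y=[a^{o4^t}DM\mid a^{o4^t}M^{o4^t}J_n].
\]
\item Put $b=a^{o4^t}D$. Since $D^2=I$ and $D^{o4^t}=D$, we have $b\,(DMJ_n)^{o4^t}=a^{o4^t}D\,DM^{o4^t}J_n=a^{o4^t}M^{o4^t}J_n$, and also $bM=a^{o4^t}DM$.
\end{itemize}
Hence $y=b[M\mid (M_D')^{o4^t}]\in C$, so $\Phi(x)^r=\Phi(y)\in\Phi(C)$ and $\Phi(C)$ is reversible.

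\emph{Parameters.} The length is clearly $2n$. The $k$ rows of $[M\mid\ast]$ remain independent because $rank(M)=k$, so the dimension is $k$. For the distance, any codeword $[aM\mid a(M_D')^{o4^t}]$ has first block in $\langle M\rangle$, which has weight $\ge d$ when $a\neq0$. The second block is a nonzero codeword of $\langle (M_D')^{o4^t}\rangle$, since that matrix has rank $k$. By Corollary \ref{same parameter DNA coro}, that code has minimum distance $d$, so the second block also has weight $\ge d$. Therefore $d'\ge 2d$, exactly as in Theorem \ref{reversible parameter}.

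The main obstacle is the bookkeeping of the reversal step: one must correctly identify the DNA reverse of $\Phi(x)$ as coordinate reversal combined with $\sigma$ applied entrywise, and then verify that the hypothesis $D^{o4^t}=D$ is exactly what allows $D$ to pass through $\sigma$. The rest is routine.
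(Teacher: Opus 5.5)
Your proposal is correct and follows essentially the same route as the paper. Both arguments use Lemma \ref{frobenius} with $D^{o4^t}=D$ to rewrite the second block as $aDM^{o4^t}J_n$, and both realize the DNA reverse as coordinate reversal followed by the entrywise $4^t$-power (Lemma \ref{reverse and complement}). Both then take the same coefficient vector $b=(aD)^{o4^t}=a^{o4^t}D$ to show the reversed word is $\Phi(bG)$, and the parameter bound $d'\geq 2d$ comes from Corollary \ref{same parameter DNA coro} in both.
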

			\begin{proof}
				Let $G=[M\mid (M_D')^{o4^t}]$ generate the linear code $C.$ Let $c\in C$. Then by definition of a generator matrix, there exists $a\in F_{4^{2t}}^k$ such that $c =aG$. Thus the DNA codeword corresponding to $c$ is given by $\Phi(aG).$ Hence, $\Phi(c)=\Phi(a[M\mid (M_D')^{o4^t}])=\Phi([aM\mid a(M_D')^{o4^t}]).$ Since $M_D'=DMJ_n$, therefore $\Phi(c)=\Phi([aM\mid a(DMJ_n)^{o4^t}]).$ As $D^{o4^t}=D$ so by Lemma \ref{frobenius}, $(DMJ_n)^{o4^t}=D^{o4^t}M^{o4^t}J_n^{o4^t}=DM^{o4^t}J_n$. Hence, $\Phi(c)= \Phi([aM\mid aDM^{o4^t}J_n])$. By Lemma \ref{product of matrices} (2) and Lemma \ref{reverse and complement}, the reverse of $\Phi(c)$ is given by $\Phi(c)^r=\Phi([(aDM^{o4^t}J_n)^{o4^t}J_n\mid (aM)^{o4^t}J_n]).$ Hence, using Lemma \ref{product of matrices} (7), $\Phi(c)^r=\Phi([(aDM^{o4^t}J_nJ_n)^{o4^t}\mid (aMJ_n)^{o4^t}]).$ Since $J_n^2=D^2=I_d$, therefore $\Phi(c)^r=\Phi([(aDM^{o4^t})^{o4^t}\mid (aDDMJ_n)^{o4^t}]).$ Using Lemma \ref{frobenius}, $\Phi(c)^r=\Phi([(aD)^{o4^t}(M^{o4^t})^{o4^t}\mid (aD)^{o4^t}(DMJ_n)^{o4^t}])$. Putting $(aD)^{o4^t}=b\in F_{4^{2t}}^k$, we get, $\Phi(c)^r=\Phi([bM^{o4^{2t}}\mid b(M_D')^{o4^t}]).$ Since $\alpha^{4^{2t}}=\alpha $ for every $\alpha$ in $F_{4^{2t}}$, therefore $M^{o4^{2t}}=M$. Hence, $\Phi(c)^r=\Phi([(bM)\mid b(M_D')^{o4^t}])=\Phi(b[M\mid (M_D')^{o4^t}])=\Phi(bG).$ Hence, $\Phi(c)^r=\Phi(c')$, where $c'=bG$ belongs to the linear code $C$. Thus, $\Phi(c)^r\in \Phi(C).$ Therefore, $\Phi(C)$ is a reversible DNA code.
				
				Further, as order of the matrix $[M\mid (M'_D)^{o4^t}]$ is $k\times 2n$, so the length of the code $C$ is $2n$. Also, since $rank(M)=k$ and lengthening of the rows of a matrix cannot decrease the number of linearly independent rows, therefore $rank([M\mid (M'_D)^{o4^t}])=k$. Hence, the dimension of $C$ is $k$. Finally, if $u$ is a codeword in $C$, then $u=aG$ for some $a\in F_{4^{2t}}$. Thus, $u=a[M\mid (M'_D)^{o4^t}]=[aM\mid a(M'_D)^{o4^t}]=[c\mid c']$, where $c=aM$ and $c'=a(M'_D)^{o4^t}$. Using Corollary \ref{same parameter DNA coro}, both $c$ and $c'$ have Hamming weight greater than or equal to $d$. Thus the Hamming weight of $u$ being equal to the sum of weights of $c$ and $c'$ is at least $2d.$ Since, $u$ is an arbitrary codeword of $C$, therefore the minimum Hamming weight parameter of $C$ is $d'$ such that $d'\geq 2d.$ 
			\end{proof}
			\begin{coro} \cite[Theorem 5]{Oztas} Let $C'=<M>$ and $C''=<(M'')^{o4^t}>$ be two $[n,k,d]$-code over $\mathbb{F}_{4^{2t}}.$ Then, the code $C=<M\mid (M'')^{o4^t}>$ is a $[2n,k,d']-$linear code over $\mathbb{F}_{4^{2t}}$, where $d'\geq 2d$, and $\Phi(C)$ corresponds to a reversible DNA code.
		    \end{coro}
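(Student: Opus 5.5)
The plan is to obtain the corollary by specializing Theorem \ref{main theorem} to $D=J_k$. By Lemma \ref{product of matrices} (5), $M''=J_kMJ_n=(M)'_{J_k}$. Hence the generator matrix $[M\mid (M'')^{o4^t}]$ is exactly the matrix $[M\mid (M'_D)^{o4^t}]$ with $D=J_k$. Two hypotheses of Theorem \ref{main theorem} must then be checked: that $J_k$ is involutory, and that $J_k^{o4^t}=J_k$.

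The first holds by Lemma \ref{product of matrices} (1). For the second, note that every entry of $J_k$ is $0$ or $1$, and $0^{4^t}=0$ and $1^{4^t}=1$ in $\mathbb{F}_{4^{2t}}$. Taking the Hadamard $4^t$ power therefore leaves $J_k$ unchanged.

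The statement also implicitly needs $rank(M)=k$. Here I read the hypothesis that $C'=\langle M\rangle$ is an $[n,k,d]$-code, with $M$ a $k\times n$ generator matrix, as saying that the $k$ rows of $M$ are linearly independent. Over the field $\mathbb{F}_{4^{2t}}$ this is the same as $rank(M)=k$.

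With these checks in place, Theorem \ref{main theorem} gives the two conclusions directly. First, $\Phi(C)$ is reversible. Second, $C$ has length $2n$, dimension $k$, and minimum distance $d'\geq 2d$, where $d$ is the minimum distance of $\langle M\rangle$. The extra hypothesis that $C''=\langle (M'')^{o4^t}\rangle$ is also an $[n,k,d]$-code is then redundant; it is consistent with Corollary \ref{same parameter DNA coro} with $D=J_k$, and in any case it matches the lower bound used for the second block.

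I expect no real obstacle. The only point needing care is making sure the condition $D^{o4^t}=D$ holds for $D=J_k$, since this is what lets Lemma \ref{frobenius} pull $D$ through the Frobenius power inside the proof of Theorem \ref{main theorem}. This is immediate because the entries of $J_k$ lie in the prime field $\mathbb{F}_2$, which every Frobenius power fixes.
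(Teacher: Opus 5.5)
Your proposal is correct and matches the paper's approach: the paper gets this corollary by taking $D=J_k$ in Theorem \ref{main theorem}, using $M''=J_kMJ_n=M'_{J_k}$ from Lemma \ref{product of matrices}~(5). Your explicit checks are exactly what the paper leaves implicit: that $J_k$ is involutory, that $J_k^{o4^t}=J_k$ because its entries lie in $\mathbb{F}_2$, and that the hypothesis on $\langle M\rangle$ (with $M$ of size $k\times n$) gives $rank(M)=k$.
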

		   \begin{rmk}\cite[Corollary 3]{Oztas}\label{complement} states: "let $C'=<M>$ be an $[n,k,d]$-linear code over $F_{4^{2t}}$ and $(1,1,1,...,1)\in C'$. Then $\Phi(C)$ is a reversible complement DNA code, where $C=<M\mid (M'')^{o4^t}>$ over $F_{4^{2t}}$."
		   
		   In the following, we show that this result is not true in general. In support, we provide a characterization of a DNA code closed under complementation and a counter example. Further in Proposition \ref{right characterisation}, we provide the correct version of that result. 
	        \end{rmk}		    
		    \begin{propo}\label{characterisation of complementation} Let $C$ be a non-zero linear code over $F_{4^{2t}}$. If $\Phi(C)$ is the DNA code corresponding to $C$, where $\Phi$ is as given in \cite[Table 2]{Oztas}
		    then $\Phi(C)$ is closed under complementation if and only if
		    $(1,1,\ldots,1) \in C$.
		    \end{propo}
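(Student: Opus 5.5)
Write $\mathbf{1}=(1,1,\ldots,1)\in F_{4^{2t}}^n$, where $n$ is the length of $C$. Extending Lemma \ref{reverse and complement} coordinatewise, the Watson--Crick complement of a DNA word $\Phi(c)$ is $\Phi(c)^c=\Phi(c+\mathbf{1})$ for every $c\in F_{4^{2t}}^n$. Closure of $\Phi(C)$ under complementation therefore means exactly that $c+\mathbf{1}\in C$ for every $c\in C$, because $\Phi$ is a bijection and $\Phi(c+\mathbf{1})\in\Phi(C)$ if and only if $c+\mathbf{1}\in C$. The plan is to check both directions of the proposition against this reformulation.

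For sufficiency, assume $\mathbf{1}\in C$. Since $C$ is an $F_{4^{2t}}$-subspace, it is closed under addition, so $c+\mathbf{1}\in C$ for every $c\in C$. Then $\Phi(c)^c=\Phi(c+\mathbf{1})\in\Phi(C)$, and $\Phi(C)$ is closed under complementation.

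For necessity, assume $\Phi(C)$ is closed under complementation. Take the zero codeword $0\in C$; its DNA image is $\mathrm{AA}\cdots\mathrm{AA}$. Its complement $\Phi(0)^c=\Phi(0+\mathbf{1})=\Phi(\mathbf{1})$, which is $\mathrm{TT}\cdots\mathrm{TT}$, lies in $\Phi(C)$. Injectivity of $\Phi$ on words then gives $\mathbf{1}\in C$. Nonzeroness of $C$ is not actually needed here, since $0$ belongs to every linear code.

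The only point needing care is the coordinatewise extension $\Phi(c)^c=\Phi(c+\mathbf{1})$. This follows directly from Lemma \ref{reverse and complement} applied to each entry, together with the fact that complementation of a DNA word acts letter by letter. Apart from that, the argument is immediate.
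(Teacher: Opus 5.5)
Your proof is correct and follows essentially the same argument as the paper: sufficiency uses closure of $C$ under addition together with the coordinatewise identity $\Phi(c)^c=\Phi(c+\mathbf{1})$, and necessity uses injectivity of $\Phi$. The only difference is that for necessity you apply complementation closure directly to the zero codeword, whereas the paper takes an arbitrary $x\in C$, obtains $x+\mathbf{1}\in C$, and subtracts $x$. Your version is just the special case $x=0$, and your remark that nonzeroness of $C$ is unnecessary is accurate.
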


		    \begin{proof} Let $(1,1,\ldots,1) \in C$. Since, $C$ is a linear code, for any $x$ in $C$, $x + (1,1,\ldots,1) \in C.$ It follows that the DNA complement of the DNA codeword $\Phi(x)$ is $\overline{\Phi(x)} = \Phi\big(x + (1,1,\ldots,1)\big)$ in $\Phi(C).$  Hence, $\Phi(C)$ is closed under complementation.
		    
		    \medskip
		    
		    Conversely, let $x \in C$. Then 
		    $
		    \Phi(x) \in \Phi(C).
		    $
		    Since $\Phi(C)$ is closed under complementation, therefore
		    $
		    \overline{\Phi(x)} \in \Phi(C).
		    $
		    This implies that
		    $
		    \Phi\big(x + (1,1,\ldots,1)\big) \in \Phi(C).
		    $
		    Thus 
		    $
		    x + (1,1,\ldots,1) \in C.
		    $   
		    Since $C$ is linear, therefore $(x + (1,1,\ldots,1)) - x = (1,1,\ldots,1) \in C.
		    $ 
		\end{proof}
		Now we give a counter example to \cite[Corollary 3]{Oztas}
		    
		        \begin{exl}\label{wrong characterization example}
		    	Consider $\mathbb{F}_{4^{2t}} = \mathbb{F}_{16}$, that is, $t=1$. Let 
		    	$M = \begin{bmatrix} 1 & 1 & 1 \\ 0 & 1 & 0 \end{bmatrix}$. 
		    	Then $M'' = \begin{bmatrix} 0 & 1 & 0 \\ 1 & 1 & 1 \end{bmatrix}$. 
		    	Thus, $(M'')^{o4} = \begin{bmatrix} 0 & 1 & 0 \\ 1 & 1 & 1 \end{bmatrix}$. 
		    	Clearly, both $M$ and $(M'')^{o4}$ have rank $2$ over $\mathbb{F}_{16}$ and so they induce injective maps from $\mathbb{F}_{16}^2$ to $\mathbb{F}_{16}^3$. 
		    	By Proposition \ref{main theorem}, the linear code $C$ generated by 
		    	$[\, M \mid (M'')^{o4} \,]$ corresponds to a reversible DNA code $\Phi(C)$.
		    	
		    	Consider $a = [1\ 0] \in \mathbb{F}_{16}^2$. Then 
		    	$aM = [1\ 0]\begin{bmatrix} 1 & 1 & 1 \\ 0 & 1 & 0 \end{bmatrix} = [1\ 1\ 1]$. 
		    	Thus $[1\ 1\ 1]$ belongs to the linear code generated by $M$. Hence, by \cite[Corollary 3]{Oztas} , 
		    	$\Phi(C)$ should be a reversible complement DNA code. However, since $M$ is an injective map, therefore  	$aM = [1\ 1\ 1]$ implies $bM \neq [1\ 1\ 1]$ for all $b \neq a$ in $\mathbb{F}_{16}^2$. 
		    	Also, $a[\, M \mid (M'')^{o4} \,] =a[\, M \mid M'' \,]= [\, aM \mid aM'' \,] \neq [1\ 1\ 1\ 1\ 1\ 1]$, 
		    	because $aM'' = [1\ 0]\begin{bmatrix} 1 & 0 & 1 \\ 0 & 0 & 1 \end{bmatrix} = [0\ 1\ 0]$. 
		    	Therefore, $a[\, M \mid M'' \,] \neq [1\ 1\ 1\ 1\ 1\ 1]$ for any $a \in \mathbb{F}_{16}^2$. 
		    	It follows that $[1\ 1\ 1\ 1\ 1\ 1]$ vector does not belong to $C$. Hence, by Proposition \ref{characterisation of complementation}, 
		    	$\Phi(C)$ is not closed under complementation.
		     \end{exl}
		       \begin{propo}\label{right characterisation}
			 	Let $M$ be a $k\times n$ matrix over a field $F_{4^{2t}}$ such that $rank(M)=k$ and $D$ be any $k\times k$ involutory matrix over $F_{4^{2t}}$ such that $D^{o4^t}=D.$ Let $C$ be the linear code generated by $[M\mid (M'_D)^{o4^t}]$. Then the code $\Phi(C)$ is reversible. Further if $(1,1,1,...,1)\in C$ then $\Phi(C)$ is a reversible complement DNA code.
			 	
			 \end{propo}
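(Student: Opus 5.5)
The first assertion, that $\Phi(C)$ is reversible, is exactly the content of Theorem \ref{main theorem}. The hypotheses match verbatim: $rank(M)=k$, $D$ involutory with $D^{o4^t}=D$, and $C=<[M\mid (M'_D)^{o4^t}]>$. So I will simply invoke it. Recall that its argument writes $c=aG$ and uses Lemma \ref{frobenius} to get $(M'_D)^{o4^t}=DM^{o4^t}J_n$. Lemmas \ref{product of matrices} and \ref{reverse and complement} then show that $\Phi(c)^r=\Phi(bG)$ with $b=(aD)^{o4^t}$.

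For the second assertion, assume $\mathbf{1}=(1,1,\ldots,1)\in C$, where $\mathbf{1}$ has length $2n$. I will apply Proposition \ref{characterisation of complementation} directly to the linear code $C$ of length $2n$ over $F_{4^{2t}}$. The code $C$ is nonzero because $rank(M)=k\geq 1$; in any case $C$ contains $\mathbf{1}$. That proposition then gives that $\Phi(C)$ is closed under Watson--Crick complementation. Concretely, for $x\in C$ we have $x+\mathbf{1}\in C$ by linearity, and Lemma \ref{reverse and complement}, applied coordinatewise, gives $\Phi(x+\mathbf{1})=\Phi(x)^c$.

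It remains to combine the two closure properties. A reversible complement DNA code is one closed under taking the reverse-complement of each codeword. Given $\Phi(x)\in\Phi(C)$, the first part gives $\Phi(x)^r\in\Phi(C)$, say $\Phi(x)^r=\Phi(y)$ with $y\in C$. Complementation closure then gives $\Phi(y)^c=\Phi(y+\mathbf{1})\in\Phi(C)$. Hence $(\Phi(x)^r)^c\in\Phi(C)$. Reversal and complementation commute, since complementation acts letterwise and reversal only permutes positions. So the order of the two operations does not matter.

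The only point needing care is that the hypothesis is $\mathbf{1}\in C$ for the concatenated code, not $\mathbf{1}\in<M>$; that weaker hypothesis is what failed in Example \ref{wrong characterization example}. Once the hypothesis is read correctly there is no real obstacle, and the proof is a combination of Theorem \ref{main theorem} and Proposition \ref{characterisation of complementation}.
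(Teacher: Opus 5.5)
Your proposal is correct and matches the paper's proof, which simply cites Theorem \ref{main theorem} for reversibility and Proposition \ref{characterisation of complementation} for closure under complementation. Your explicit step composing the two closures, using that reversal and letterwise complementation commute, spells out what the paper leaves implicit.
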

			 \begin{proof}
			 	It follows from Proposition \ref{characterisation of complementation} and Theorem \ref{main theorem}.
			 \end{proof}
			 
			 We give an example of a reversible DNA code constructed as an image of a linear code over $\mathbb{F}_{16}$ generated by a matrix which is obtained by concatenating a given matrix $M$ with its $D-$reflection, $M_D'$, where $D$ is an involutory matrix satisfying the constraints of Theorem \ref{main theorem}.
			 \begin{exl}\label{main example} Consider a  $2\times 5$ matrix $M=

			 		$ over $\mathbb{F}_{16}$. Since the second row is not a scalar multiple of the first row, therefore, both rows are linearly independent and $M$ has rank $2.$ Now, we consider a $2\times 2$ involutory matrix $D=%
$ over $\mathbb{F}_{16}$. It is easy to check that $D^{o4}=D$. Hence $D$ satisfies all the constraints of Theorem \ref{main theorem}. The concatenated matrix $G=[M\mid DM^{o4}J_6]$ is given by $$G=%
.$$ Thus by Theorem \ref{main theorem} the $\Phi$ image of the linear code generated by the concatenated matrix $G=[M\mid DM^{o4}J_6]$ is a reversible DNA code. By SageMath computations \cite{Sage}, the code parameters of the linear code generated by $G$ are $[10,2,8]$. Hence $G$ generates an AMDS, that is, almost maximum distance separable linear code. We explicitly write the 256 linear and corresponding DNA codewords for the generator matrix $G.$
			 		
			 		%
			 \end{exl}
			 \begin{rmk}\label{better 2}
			 	There are precisely $16$ involutory $2\times 2$ matrices $D$ over $\mathbb{F}_{16}$ such that $D^{o4}=D$. Thus any full rank $2\times n$ matrix $M$ over $\mathbb{F}_{16}$ can be concatenated with $DM^{o4}J_n$ for any of the 16 $D's$ and the concatenated matrix will generate a linear code whose $\Phi$ image is a reversible DNA code. 
			 \end{rmk}
			 
	\section{Conclusion and Scope}
	In this work, we have developed a unified and significantly generalized framework for the construction and analysis of reversible linear and reversible DNA codes, extending and strengthening the approach followed in \cite{Oztas}. The central contribution lies in the introduction of a matrix-theoretic concatenation scheme based on involutory matrices, which substantially enlarges the class of admissible generator matrices that yield reversible structures. Unlike the approach in \cite{Oztas} that rely on more restrictive constructions, the present framework systematically uses products of involutory matrices to generate new families of codes while preserving reversibility. This not only broadens the algebraic scope of the theory but also leads to improved code parameters, particularly in terms of minimum distance. Due to this fact, it helps a lot in sustainable development.  This also provides structural flexibility in terms of generator matrix construction.
	
	The generalization has been carried out at two distinct but interconnected levels. First, at the level of linear codes over finite fields, we establish that concatenation using involutory matrices yields a rich class of reversible codes with enhanced parameters. The use of matrix products as a foundational tool allows for a more systematic proof strategy, in comparision to earlier codeword-wise  arguments given in \cite{Oztas}. This technique provides deeper insight into the algebraic structures governing reversibility, showing that the  involutory matrices and generator matrices can be systematically controlled to achieve desired properties. As a consequence, the construction naturally accommodates a wider variety of generator matrices, thereby overcoming limitations inherent in the previous approach.
	
	At the second level, these results are extended to DNA codes through the use of a DNA correspondence map in \cite{Oztas} and Hadamard power, which serve as a bridge between algebraic coding structures and nucleotide representations. By embedding the linear constructions into the DNA framework, we obtain reversible DNA codes that inherit the improved parameters of their algebraic counterparts. The Frobenius-based approach ensures compatibility with biological constraints while maintaining the algebraic properties of the construction.
	A key distinguishing feature of this work is the systematic use of matrix product techniques in both construction and proof. This approach is different from existing methods and offers a convenient algebraic tool for analyzing reversibility. It enables easier derivations, gives generalization, and provides a clear structural interpretation of the resulting codes. In particular, the involutory nature of the matrices plays a crucial role in ensuring reversibility, and the product structure allows for multiple constructions. This framework is expected to be adaptable to other classes of codes and may inspire further developments in algebraic coding theory.
	
	In addition to these constructive contributions, we address some problems in the theory. Specifically, we provide a solution to the open problem posed in \cite{Oztas}, regarding the characterization of reversible DNA codes as images of non reversible linear codes. Furthermore, we give a counterexample that falsifies one of the earlier characterizations of reversible-complement DNA codes. We also provide the correct characterization.
	
	The implications of these results extend beyond theoretical interest. Reversible and reversible-complement DNA codes play a crucial role in DNA-based data storage, where error resilience, hybridization control, and sequence stability are of foundational importance. The expanded class of generator matrices and improved code parameters obtained in this work can lead to more efficient encoding schemes, higher storage density, and enhanced robustness against synthesis and sequencing errors. Moreover, the algebraic flexibility of the proposed constructions may allow for the incorporation of additional biochemical constraints, such as GC-content balancing and avoidance of secondary structures, thereby increasing their practical applicability. In a broader context, the matrix-based framework may also find applications in other areas of coding theory and information science where symmetry and reversibility are desirable properties. As we can concatenate a given generator matrix with a large number of D-reflected matrices, we can use this process for cryptographic key development.
	
	In summary, this work advances the theory of reversible codes by introducing a flexible construction scheme based on involutory matrix concatenation. By generalizing existing results at both the linear and DNA levels, employing a novel matrix product approach, resolving an open problem, and correcting a previously held characterization, we provide a deeper and more accurate understanding of reversible coding structures. These contributions not only enrich the theory but also opens new possibilities for practical applications and future research.
		\section*{Acknowledgement} The second author is  grateful to the Council of Scientific and Industrial Research for funding this research, file number: 09/0001(20883)/2025-EMR-I.
		
	\section*{Appendix} Consider the rank $2$ matrix $M=
$ over the field $\mathbb{F}_{16}.$ Then using SageMath computations \cite{Sage} we list the code parameters of the reversible linear codes generated by $[M\mid M_D']$ for each of the 256 $2\times 2$ involutory matrix $D$ over $\mathbb{F}_{16}$. In the following table the involutory matrix used for getting $D-$ reflected matrix of $M$ is followed by the parameters of the linear code generated by $[M\mid M_D'].$
	
	\renewcommand{\arraystretch}{1.2}
	%

By direct counting,we find that the number of MDS codes with parameters [8,2,7] is 120 while the
number of AMDS codes with parameters [8,2,6] is 136. Thus, by a single $2\times 4$ matrix over $\mathbb{F}_{16}$, we can construct 256 different matrices which generate reversible linear codes over $\mathbb{F}_{16}$. 
	
\end{document}